\newtheorem{theorem}{Theorem}
\newtheorem{lemma}{Lemma}
\newtheorem{proposition}{Proposition}
\theoremstyle{definition}
\newtheorem{definition}{Definition}
\newtheorem{remark}{Remark}
\newcommand{\eps}{\varepsilon}
\newcommand{\ul}{\underaccent{\bar}}
\newcommand{\ol}{\bar}
\newcommand{\df}{\mathrm{d}}
\newcommand{\bdis}{\begin{displaymath}}
\newcommand{\edis}{\end{displaymath}}
\newcommand{\beq}{\begin{equation}}
\newcommand{\eeq}{\end{equation}}
\newcommand{\bea}{\begin{eqnarray*}}
\newcommand{\eea}{\end{eqnarray*}}
\newcommand{\bean}{\begin{eqnarray}}
\newcommand{\eean}{\end{eqnarray}}
\newcommand{\Tau}{\mathcal{T}}
\newcommand{\E}{\mathbb{E}}
\DeclareMathOperator*{\argmax}{arg\,max}
\DeclareMathOperator*{\argmin}{arg\,min}
\begin{document}

%\begingroup
\def\uppercasenonmath#1{} % this disables uppercasing title
\let\MakeUppercase\relax % this disables uppercasing authors
%\endgroup

\title[\uppercase{Compromise, Don't Optimize}]{\larger Compromise, Don't Optimize:\\
Generalizing Perfect Bayesian Equilibrium\\ to Allow for Ambiguity}
\author[\uppercase{Schlag and Zapechelnyuk}]{%
\larger \textsc{Karl H.~Schlag and Andriy
Zapechelnyuk}}
\date{\today}
\thanks{ \ \\
\textit{Schlag}: Department of Economics, University of Vienna, Oskar-Morgenstern-Platz 1,
1090 Vienna, Austria. \emph{E-mail:} karl.schlag@univie.ac.at. \\
\textit{Zapechelnyuk}: School of Economics and Finance, University of St Andrews, Castlecliffe, the Scores, St Andrews KY16 9AR, UK. {\it E-mail:} {az48@st-andrews.ac.uk.} \\ 
\ \\
We are grateful to Pierpaolo Battigalli, Simon Grant, and Clara Ponsati for their comments.
}

\begin{abstract} 
We introduce a solution concept for extensive-form games of incomplete information in which players need not assign likelihoods to what they do not know about the game. This is embedded in a model in which players can hold multiple priors. Players make choices by looking for compromises that yield a good performance under each of their updated priors. Our solution concept is called perfect compromise equilibrium. It generalizes perfect Bayesian equilibrium. We show how it deals with ambiguity in Cournot and Bertrand markets, public good provision, Spence's job market signaling, bilateral trade with common value, and forecasting.

\bigskip

\noindent\emph{JEL\ Classification:}\ D81, D83\newline

\noindent\emph{Keywords:} compromise, multiple priors, loss, robustness, perfect Bayesian equilibrium, perfect compromise equilibrium, solution concept

\end{abstract}

\maketitle

\newpage
\section{Introduction}
Modeling lack of information is at the center stage of economics. Agents might not know the previous choice of someone else. Or they might not know the type of their opponent. Or they might not know their own payoffs.

The concept of {\it perfect Bayesian equilibrium} (PBE) deals with uncertainty by forcing players to specify priors that describe precise likelihoods of the possible states of the world. However, being uncertain seems to contradict any ability to assign such probabilities. Players might be ambiguous and not willing or able to specify probabilities. They might only be able to identify which states are possible.

In this paper we introduce a solution concept that does not force players to formulate priors. Our solution concept is called {\it perfect compromise equilibrium} (PCE). 
It applies to extensive-form games of incomplete information. Players are allowed to be ambiguous about what they do not know about the game. Ambiguity is modeled by allowing each player to hold a set priors. A player without probability assessments is one that has a set of degenerate priors. A standard Bayesian player is one that has a unique prior. PCE includes PBE as a special case when each player in the game is endowed with a single prior. 
Our solution concept is readily defined once we have resolved the following two issues.  How to learn from the past? How to model decision making under ambiguity?

Learning from the past is modeled as follows. Each player starts a game with a set of priors. These priors are then updated, prior by prior, using Bayes' rule whenever possible \cite[known as full Bayesian updating,][]{Pires}. This determines the player's posterior beliefs at each of her information sets. As in PBE, there are no restrictions on how a prior is updated at information sets that it does not reach.

Decision making under ambiguity is modeled as follows. A player makes a decision at each of her information sets by choosing a {\it best compromise} given the set of her posterior beliefs at that information set.
This is a decision that balances the loss of not making the optimal
decision for each of her beliefs. This criterion collapses to 
expected utility maximization if there is only one belief or if there is
a dominant action at the given information set. The concept of best
compromise follows the tradition of minimax regret and is founded on many pillars. It has an axiomatic foundation.
It is similar to classic expected utility maximization when there is little 
ambiguity, in the sense that all beliefs are close to each other, or when the loss of not making the optimal decision is small for each of the beliefs. Best compromises can be used to justify behavior in front of
people with different preferences. They formalize the everyday notion of making a compromise.

We assume that a player's choice at each of her information sets is made given the equilibrium behavior of herself and others at all other information sets. In particular, the player anticipates her own choices at subsequent information sets, and hence follows {\it consistent planning} \citep{Strotz56,Siniscalchi2011}.

Formally, our solution concept, PCE, specifies for each player a strategy and a belief mapping. The strategy identifies the action the player chooses at each of her information sets. The belief mapping maps each prior of the player to a belief over decision nodes in each of her information sets. We show that a PCE exists in finite games. We illustrate the PCE concept in a simple game that involves a market of lemons with quality inspections. This illustration demonstrates how multiple priors are updated and highlights differences in the reasoning as compared to PBE.

We are particularly interested in modeling players who  have difficulty forming priors, or who are extremely ambiguous and only focus on which states are possible, without assessing their likelihoods. For instance, it seems unlikely that firms conjecture a specific probability distribution when they think about what demand they will be facing. Yet it seems plausible that they put bounds on the uncertain demand. These bounds can come from the most optimistic and pessimistic scenarios provided by expertise.
Situations like this can be modeled within our framework by letting the set of priors consist only of degenerate priors. We call this {\it genuine ambiguity}. This way of modeling incomplete information without using priors comes with
numerous advantages in comparison to PBE. Solutions are often easier
to obtain. They are more parsimonious as they do not change with a
prior. Solutions can be more intuitive as they are simple and depend on observables and
not on fictitious distributions. These advantages are demonstrated in our examples.

We investigate six salient economic examples. We consider Cournot competition with unknown demand, where firms postulate bounds on the true demand. We consider Bertrand competition where firms assess lower and upper bounds on the marginal costs of their rivals. We consider public good provision 
where beneficiaries of a public good do not know each others' values and hypothesize an interval where these values can be. We consider Spence's job market where employers are uncertain about the cost of education and the productivity of workers, and conjecture bounds on these parameters. We consider bilateral trade under common value where each party knows an interval that contains the true value. Finally, we consider forecasting of a random variable with unknown distribution.

These examples highlight the value of the PCE concept in terms of realism, tractability, and new insights. They are arguably more realistic than those found in the literature,
as we do not have to confine ourselves to parametric models of uncertainty or to models with two states (high and low). These examples involve strategic decision making under rich uncertainty where the PBE analysis is intractable. New insights appear. We find that replacing priors by bounds on uncertain parameters has little impact on profits in Cournot and Bertrand competition settings where compromise values are small. In these contexts it makes little sense to think in more detail about which state is really the true one, as payoffs would only be slightly higher in some states but could be substantially lower 
in other states. Yet loosening these bounds causes firms to react differently. They become more competitive under Cournot competition and less competitive under Bertrand competition. In the public good game, we show the ease of comparing policies and the simplicity of the beneficiaries' contribution rules. In the separating equilibrium of Spence's job market signaling game, better educated workers are not necessarily more productive, unlike in the classic model with two types \citep{Spence73}. In bilateral trade with common value, we find that trade is possible, as opposed to
the famous no-trade theorem for PBE \citep{MilgromStokey82}. The possibility that the trading partners
have different valuations leads to trade with positive probability in a PCE, as
ignoring this possibility generates losses that the traders want to minimize. 
Finally, when forecasting a random variable with a known mean and unknown distribution based on a noisy signal, the best-compromise forecast is a weighted average of the mean and the signal.

\vspace{6pt}
\noindent \textbf{Related Literature.} Our paper contributes to the literature
on robustness and ambiguity in games of incomplete information. 

A paper that at a glance may seem very similar to ours is \cite{Hanany2018}. They also consider general extensive form games with incomplete information. Their players have smooth ambiguous
preferences \cite[see also][]{Klibanoff2005}. Specifically, a player combines or aggregates different possible priors
into a single belief using a distribution over these priors and a concave
aggregator function. This aggregated belief is updated over time in a dynamically
consistent fashion. Thus, a player has a very detailed understanding of how the different
priors should be weighted. In contrast, the different priors in our
model remain conceptually separated. The inability or unwillingness to
combine priors is at the heart of our approach. Compromises are chosen
as a way to resolve the conflict of having different possible
understandings of the environment. In fact, one of the emphases of our
paper is that it offers a means to get away from probability assessments. Our approach can handle a player who
wishes to capture uncertainty by assessing a set of possible states,
without any use of priors.

An important ingredient of our solution concept is the use of compromise for making choices when the true state is unknown. 
A popular alternative approach in the literature on ambiguity is maximin preferences \citep{Wald50,Gilboa89}. These preferences have been brought to simultaneous-move games with incomplete information and multiple priors by \cite{Epstein96}, \cite{Kajii97}, \cite{Kajii2005}, and \cite{Azrieli2011}. While the maximin approach can be suitable in applications where players are pessimistic and care about the worst possible payoffs, it leads to unintuitive results in our examples. For instance, in Bertand duopoly with ambiguity about the rival's cost, maximin utility leads firms to shut down. To obtain nontrivial results, additional structural assumptions need to be added, such as assuming knowledge of the mean state. Another approach found in literature is Knightian uncertainty with incomplete preferences. This has been used by Chiesa et al.~(\citeyear{Chiesa}) to model bidding in auctions.

Our idea of best compromise has origins in minimax regret \citep{Savage51} and connects to approximate optimality. Our optimization criterion differs from minimax regret as evaluation occurs at each information set, while minimax regret traditionally evaluates regret ex-post. 
Furthermore, PCE retains the strategic reasoning of PBE, as players have certainty about each others' strategies. For an investigation of minimax regret under strategic uncertainty see \cite{Linhart1989}, and under partial strategic uncertainty see \cite{RenouSchlag2010}.

In simultaneous-move games, PCE can be considered as a generalization of ex-post Nash equilibrium \citep{Cremer1985}. It
can be thought of as an $\varepsilon $-ex-post Nash equilibrium in which the
smallest possible value of $\varepsilon $ is chosen for each player. In the context of $%
\varepsilon $-Nash equilibrium \citep{Radner80} the value of $\eps$ is interpreted a minimal level of improvement necessary to trigger a deviation. Our interpretation is different. The value of $\varepsilon $ measures the compromise needed to accommodate all beliefs. In
particular, the threshold $\varepsilon $ is endogenous in a PCE.

PCE can be interpreted as a robust version of PBE where
robustness in the sense of \cite{Huber1965} means to make choices that also
perform well if the model is slightly misspecified. Being a compromise, our
suggested strategies perform well under each prior given how others 
make their choices, never doing too badly relative to what could be
achieved under that prior. \cite{Stauber} analyzes the local robustness of PBE to small degrees of
ambiguity about player's beliefs. In particular, players do not adjust their play to this ambiguity, unlike our paper.

We proceed as follows. In Section \ref{s:pce} we introduce our solution concept, prove existence, and demonstrate it in a simple game. In Section \ref{s:examples} we illustrate PCE in six self-contained economic examples. Section \ref{s:concl} concludes. All proofs are in Appendix A. An alternative model of the forecasting example is in Appendix B.

\section{Perfect Compromise Equilibrium}\label{s:pce}

We introduce a solution concept called {\it perfect compromise equilibrium (PCE)}. The concept is formally defined in Section \ref{s:setup}. It is further discussed in Section \ref{s:disc-1} and illustrated by a simple example in Section \ref{s:ex-lemons}. A reader who wishes to be spared with the formalities and seeks to understand the essence of PCE and its applicability can jump to Section \ref{s:examples} that presents self-contained economic examples.

\subsection{Formal Setting}\label{s:setup}
\ Consider a finite extensive-form game described by $(N,\mathcal G,$ $\Omega,(\Pi_1,...,\Pi_n),(u_1,...,u_n))$, where $N=\{1,...,n\}$ is a set of players, $\mathcal G$ is a finite game tree, $\Omega$ is a finite set of states, $\Pi_i\subset\Delta(\Omega)$ is a finite set of priors of player $i$, and $u_i$ is a payoff function of player $i$. Note that we allow players to have different sets of priors. 

The game tree $\mathcal G$ describes the order of players' moves, their information sets, and actions that are available at each information set. It is defined by a set of linked decision nodes and terminal nodes that form a tree. Each decision node is assigned three elements: a player $i$, an information set $\phi_i$, and a set of actions available to player $i$ at that information set. Information set $\phi_i$ is a set of all the decision nodes that player $i$ cannot distinguish. Information sets and action sets satisfy the standard assumptions of games with perfect recall. Let $\phi_0$ be the initial decision node of the game, let $\Phi_i$ be the set of all information sets of player $i$ for each $i\in N$, and let $\Tau$ be the set of terminal nodes of the game.  
Let $A_{\phi_i}$ be a finite set of actions available at an information set $\phi_i$, and let $\mathscr A_{\phi_i}=\Delta(A_{\phi_i})$ be the corresponding set of mixed actions.

In the spirit of \cite{Harsanyi67}, all incomplete information is captured by a move of nature at the beginning of the game. Nature moves only once, at the initial decision node $\phi_0$. An action of nature $\omega$ is called {\it state} and is chosen from the set of states $\Omega$.

The game terminates after finitely many moves at some terminal node, and players obtain payoffs. A payoff function of each player $i\in N$ specifies the payoff $u_i(\tau)$ of player $i$ at each terminal node $\tau\in \Tau$. %Player 0 is nonstrategic, so we assume that her payoff is always zero.

A strategy of player $i\in N$ prescribes a mixed action $s_{\phi_i}\in \mathscr A_{\phi_i}$ for each information set $\phi_i\in\Phi_i$. %Player 0 is non-strategic and follows an exogenously given strategy $s_0$. 
A strategy profile $s$ describes the behavior of all players throughout the game. 

Like in Bayesian games, we also specify posterior beliefs of the players in their information sets. Unlike in Bayesian games, each player may have multiple beliefs in each of her information sets. These beliefs are derived from the set of priors, prior by prior, following Bayes' rule whenever possible. This procedure is known as full Bayesian updating \citep{Pires}. We refer to Section \ref{s:ex-lemons} for an example that illustrates this updating.

Formally, for each player $i$ and each information set $\phi_i\in\Phi_i$, let $\beta_{\phi_i}:\Pi_i\to\Delta(\phi_i)$ be a belief mapping that associates each prior $\pi_i\in \Pi_i$ of player $i$ with a posterior probability distribution $\beta_{\phi_i}$ over the decision nodes in $\phi_i$. Thus, in the information set $\phi_i$, player $i$ faces a set $B_{\phi_i}$ of posterior beliefs derived from the set of priors $\Pi_i$, where
\[
B_{\phi_i}=\left\{\beta_{\phi_i}(\pi_i):\pi_i\in\Pi_i\right\}.
\]
We will refer to $B_{\phi_i}$ as the set of {\it beliefs} at $\phi_i$, and to the profile $\beta=(\beta_{\phi_i})_{\phi_i\in\Phi_i,i\in N}$ as the {\it belief system}.

Like in PBE, we will require consistency of beliefs. 

\begin{definition}\label{def:consistency} 
A belief mapping $\beta_{\phi_i}$ is called {\it consistent under a strategy profile $s$} if for each prior $\pi_i\in\Pi_i$ such that the information set $\phi_i$ is reached with a strictly positive probability under strategy profile $s$, the belief $\beta_{\phi_i}(\pi_i)$ is derived by Bayes rule from $\pi_i$.

A belief system $\beta$ is {\it consistent under a strategy profile $s$} if for each $i\in N$ and each $\phi_i\in\Phi_i$ the belief mapping $\beta_{\phi_i}$ is consistent under $s$.
\end{definition}

Note that our definition of consistency does not impose any discipline on the out-of-equilibrium beliefs. If an information set $\phi_i$ cannot be reached under a given prior $\pi_i$ and a given strategy profile $s$, then every belief $\beta_{\phi_i}(\pi_i)\in\Delta(\phi_i)$ is consistent under $s$. Of course, not every choice of out-of-equilibrium beliefs can be sensible in applications. This is the very same problem that emerged in the context of PBE and gave rise to a vast literature on PBE refinements. This problem is of equally high importance for PCE. However, addressing this problem would take us away from the main messages of this paper. Neither the idea of PCE, nor its properties in the examples considered in this paper change if additional assumptions about out-of-equilibrium beliefs are made. So, we leave this question for future research.

%\subsection{Solution Concept}\label{def}
Next we define how decisions are made at an information set $\phi_i$. We fix a strategy profile $s$ and determine how to make a choice at $\phi_i$, while keeping choices at all other information sets fixed. The difficulty of making a decision at $\phi_i$ is that the player does not know which belief in the set of beliefs $B_{\phi_i}$ should be used to evaluate the expected payoff. We resolve this issue by assuming the player chooses a best compromise. This is an action that is never too far from the best action under each belief in $B_{\phi_i}$.

Formally, consider a pair $(s,\beta)$. Denote by $\bar u_i(s_{\phi_i}|\phi_i,s,b_i)$ the expected payoff of player $i$ from choosing a mixed action $s_{\phi_i}\in \mathscr A_{\phi_i}$ in an information set $\phi_i$ under the belief $b_i$ over the decision nodes in $\phi_i$, assuming that the play is given by $s$ elsewhere in the game.
The payoff difference
\[%\beq\label{e:loss}
\sup_{x_i\in \mathscr A_{\phi_i}} \bar u_i(x_i|\phi_i,s,b_i)-\bar u_i(s_{\phi_i}|\phi_i,s,b_i)
\]%\eeq
is called player $i$'s {\it loss} from choosing mixed action $s_{\phi_i}$ at information set $\phi_i$ given belief $b_i$. It describes how much better off player $i$ could have been at this information set given this belief if, instead of choosing $s_{\phi_i}$, she had chosen the best action, assuming that the actions in all other information sets are prescribed by $s$. The {\it maximum loss} of player $i$ from choosing a mixed action $s_{\phi_i}$ in an information set $\phi_i$ under $(s,\beta)$ is given by
\[
l(s_{\phi_i}|\phi_i,s,\beta)=\max_{b_i \in B_{\phi_i}}\left( \sup_{x_i\in \mathscr A_{\phi_i}} \bar u_i(x_i|\phi_i,s,b_i)-\bar u_i(s_{\phi_i}|\phi_i,s,b_i)\right).
\]
So the maximum is evaluated over all beliefs of player $i$ at $\phi_i$. 

Player $i$ makes a decision that minimizes the maximum loss. Such a choice is called a {\it best compromise}. Formally she chooses an element of
\beq\label{e-pce-opt}
\argmin_{s_{\phi_i}\in\mathscr A_{\phi_i}} l(s_{\phi_i}|\phi_i,s,\beta)
\eeq
at each of her information sets $\phi_i$. In equilibrium $s^*$, this means that she chooses $s^*_{\phi_i}\in\argmin_{s_{\phi_i}\in\mathscr A_{\phi_i}} l(s_{\phi_i}|\phi_i,s^*,\beta)$. Hence, when computing the maximum loss and finding the best compromise, each player assumes that the behavior is given by $s^*$ at all other information sets, including her own. Thus the players anticipate their own choices at subsequent information sets, which is known as {\it consistent planning} \citep{Strotz56,Siniscalchi2011}.

This leads to our equilibrium concept that is based on the ideas of best compromises and consistent beliefs.

\begin{definition}\label{def:pce}
A pair $(s^*,\beta^*)$ is called a {\it perfect compromise equilibrium} if 

\noindent (a) each player chooses a best compromise in each of her information sets;

\noindent (b) the belief system $\beta^*$ is consistent under the strategy profile $s^*$.
\end{definition}

We begin by establishing the existence of PCE. 
\begin{theorem}\label{p:exist}
A perfect compromise equilibrium exists.
\end{theorem}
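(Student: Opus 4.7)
\emph{Overall approach.} I would adapt the perturb-then-pass-to-the-limit template used for sequential equilibrium. Fix $\eps > 0$ small, and let $\Sigma^\eps$ denote the compact convex set of strategy profiles in which every pure action at every information set receives probability at least $\eps$. For $s \in \Sigma^\eps$ every structurally feasible path carries positive probability, so the consistency conditions of Definition~\ref{def:consistency} force the conceivable sets to take their maximal value
\[
B^s(\phi) = \bar B(\phi) := \{\omega \in \Omega : \text{there is a path in } \mathcal{G} \text{ from } \phi_0 \text{ to } \phi \text{ with nature's move } \omega\},
\]
and force the posteriors $\beta^s$ to be computed by Bayes' rule from $s$. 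On $\Sigma^\eps$ the relevant conditional reach probabilities are bounded uniformly away from zero, so $s \mapsto \beta^s$ is continuous and $B^s$ is constant (independent of $s$).

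\emph{Fixed point.} I would define $\Psi^\eps : \Sigma^\eps \rightrightarrows \Sigma^\eps$ by: $s' \in \Psi^\eps(s)$ iff for every information set $\phi$,
\[
s'_{i(\phi)}(\phi) \in \argmin_{x \in \Delta^\eps(A(\phi))} l(x, s \mid \phi, \beta^s),
\]
where $\Delta^\eps(A(\phi))$ is the sub-simplex with every coordinate at least $\eps$. For fixed $s$ the loss $l(\cdot, s | \phi, \beta^s)$ is a finite maximum over $\omega \in \bar B(\phi)$ of affine functions of $x$, hence convex and continuous in $x$, so the argmin over the compact convex set $\Delta^\eps(A(\phi))$ is nonempty and convex. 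Joint continuity of $l$ in $(x, s)$ together with continuity of $s \mapsto \beta^s$ delivers upper hemicontinuity of the argmin correspondence by Berge's theorem of the maximum; the product over information sets inherits these properties, and Kakutani's fixed point theorem yields $s^\eps \in \Psi^\eps(s^\eps)$. Let $\beta^\eps := \beta^{s^\eps}$.

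\emph{Limit and verification.} Along a subsequence $\eps_n \downarrow 0$, compactness gives $s^{\eps_n} \to s^*$ and $\beta^{\eps_n} \to \beta^*$. Take $B^* := \bar B$. Consistency (a) holds by the definition of $\bar B$. For (b), if $\beta^*(\phi|\omega)(\alpha) > 0$ and $s^*$ reaches $\phi'$ from $\alpha$ with positive probability, then $\omega$ lies on a structural path to $\phi'$ (concatenating through $\phi$ and $\alpha$), so $\omega \in \bar B(\phi') = B^*(\phi')$; Bayes' rule carries over by continuity at every $(\phi,\omega)$ whose limiting reach probability is positive. For optimality, given any $y \in \Delta(A(\phi))$, set $y^{\eps_n} := (1 - |A(\phi)|\eps_n) y + \eps_n \mathbf{1} \in \Delta^{\eps_n}(A(\phi))$, so $y^{\eps_n} \to y$. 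The fixed-point inequality $l(s^{\eps_n}_{i(\phi)}(\phi), s^{\eps_n} \mid \phi, \beta^{\eps_n}) \le l(y^{\eps_n}, s^{\eps_n} \mid \phi, \beta^{\eps_n})$ passes to the limit by joint continuity of $l$ and yields $l(s^*_{i(\phi)}(\phi), s^* \mid \phi, \beta^*) \le l(y, s^* \mid \phi, \beta^*)$ for every $y \in \Delta(A(\phi))$.

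\emph{Main obstacle.} The delicate point is that $s^*$ need not be fully mixed, so off-path conceivable sets and Bayes' ratios can degenerate in the limit. Choosing the maximal conceivable set $B^* = \bar B$ sidesteps the first problem: (a) holds by construction, and (b) only demands that $B^*$ contain those states propagated forward by positive posterior weight, which are always structurally feasible and hence already in $\bar B$. Bayes' rule is then preserved by continuity at every $(\phi, \omega)$ with positive limiting reach probability, and consistency does not require it elsewhere, so $(s^*, B^*, \beta^*)$ is the desired PCE.
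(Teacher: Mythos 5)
Your proof is correct, and it rests on the same pivotal observation as the paper's: fixing the conceivable sets at their maximal, purely structural values $\bar B(\phi)$ makes condition (a) of consistency automatic and reduces condition (b) to a statement about states that are always structurally feasible, so the only remaining work is a fixed-point problem in strategies and posteriors. Where you diverge is in how that fixed point is produced. The paper builds an augmented agent-form game---one player per information set, with interim payoff equal to the negative of the maximum loss, and nature endowed with an arbitrary full-support distribution $\mu$ that is payoff-irrelevant because the interim loss does not depend on the likelihood of $\omega$---and then simply invokes the existence of sequential equilibrium for finite games, citing a backward-induction construction with interim payoffs. You instead unpack what that citation encapsulates: the $\eps$-perturbation of the strategy simplices, the convexity of $x\mapsto l(x,s\mid\phi,\beta)$ as a finite maximum of affine functions of $x$, Berge plus Kakutani on the product of $\eps$-simplices, and a diagonal limit in $(s,\beta)$. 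Your route is self-contained and makes explicit exactly which properties of the loss are needed (convexity in own mixed action and joint continuity); the paper's route is shorter on the page but delegates the analytic work to a known theorem. One small point you should patch: if the exogenous strategy $s_0$ puts zero probability on some action of player $0$, then a state $\omega\in\bar B(\phi)$ can be structurally feasible at $\phi$ while $\phi$ is unreachable given $\omega$ even under your fully mixed perturbation of the strategic players, so $\beta^{\eps}(\phi\mid\omega)$ is not in fact pinned down by Bayes' rule as you claim. The paper disposes of this at the outset by pruning player $0$'s zero-probability actions from the tree; you need the same normalization (or an explicit arbitrary choice of posterior for such pairs, which consistency does not constrain).
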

The proof is in Appendix \ref{s:pt}.

\begin{remark}
In some applications there can be a continuum of strategies, states, and priors over states. The definition of PCE readily extends to such settings, but some additional assumptions have to be made to ensure its existence. 
\end{remark}

\begin{remark}\label{rem-1}
In some applications it can be unrealistic to assume that players choose mixed actions. Our definition of PCE can be easily adjusted if players are only allowed to choose pure actions. In this case, a best compromise means to minimize one's maximal loss among the available pure actions. Formally, we set  $\mathscr A_{\phi_i}=A_{\phi_i}$ for each player $i$ and each information set $\phi_i$, and define the concept of PCE as above.
\end{remark}

In the remainder of this section, we discuss some properties of the PCE and provide a simple example. This example illustrates the PCE concept and outlines the difference from a PBE where each player has a single prior.

\subsection{Discussion}\label{s:disc-1}
We highlight some properties of PCE.

\smallskip\noindent{\bf Best Compromise.} 
Our decision making criterion for how to make choices at a given information set
captures the intuitive notion of making a compromise. As a compromise, the performance should be satisfactory in all potential situations, as opposed to being best under some and possibly very bad under others. The concept of best compromise identifies the smallest maximal distance from first best as a measure of how large the compromise has to be. Compromises are valuable when decisions have to be justified in front of others who have heterogeneous perceptions about the environment. 

The concept of a best compromise follows the tradition of decision making under minimax regret, thus having an axiomatic underpinning \citep{Milnor,Stoye2011}. Traditionally, minimax regret is evaluated ex-post after all uncertainty is resolved. In contrast, to model a compromise in the face of several beliefs, we consider the loss attained at the interim (at a given information set) for a given belief. Stoye's (2011) axioms continue to hold from this interim viewpoint. Furthermore, our concept retains the strategic reasoning of PBE, as players know each others' strategies. This is unlike \cite{Linhart1989} who reduce the game to an individual decision problem, where the behavior of the others is treated as a move of nature.

Clearly, instead of best compromise, any other decision making criterion under ambiguity could be used for determining choices at information sets. For instance, the maximin utility criterion can be used to model pessimism or cautiousness, a world in which the player always anticipates the worst outcome.

\smallskip\noindent{\bf PCE vs PBE.}
Our definition of PCE generalizes the concept of PBE to games where some players may be ambiguous about what they do not know. When there is no ambiguity, so there is a single belief at each information set, then our setting describes a standard game of incomplete information. In this case, the loss minimization objective, as described in \eqref{def:pce}, reduces to the standard utility maximization objective. So, an action minimizes the maximum loss of a player if and only if it is a best response. Moreover, whenever there is only a single belief, the consistency requirement introduced in Definition \ref{def:consistency} reduces to the standard Bayesian consistency of beliefs. Hence, PCE becomes PBE \citep[in the sense of ][]{FT}.

The difference between PCE and PBE emerges in models where some players are ambiguous about the state of the world. The standard PBE approach forces players to quantify the uncertainty by specifying a unique belief at each information set, and then assuming that the players optimize with respect to these beliefs. Our approach sidesteps this issue by letting the players have multiple beliefs at each information set and find compromises with respect to these beliefs.

\smallskip\noindent{\bf Ex-post Nash equilibrium.} 
In simultaneous move games PCE is related to ex-post Nash equilibrium. Ex-post Nash equilibria are profiles that are Nash equilibria in the game in which the state is observed by all players at the outset of the game. This means that the maximum loss of each player at her single information set is equal to zero. Consequently, any ex-post Nash equilibrium is also a PCE. Note, however, that ex-post Nash equilibria often do not exist.

\smallskip\noindent{\bf Dominance.} A PCE survives the elimination of strictly dominated strategies, as we now demonstrate. We say that an action $a_i\in \mathscr A_{\phi_i}$ at an information set $\phi_i$ is {\it strictly dominated} for player $i$ if there exists another action $x_i\in\mathscr A_{\phi_i}$ such that player $i$'s payoff from choosing $a_i$ is strictly worse than that from choosing $x_i$, regardless of the state $\omega\in\Omega$ and of the choices of other players at any of their information sets. Iterated dominance is defined as usual. After having excluded actions that were strictly dominated in previous rounds, one checks the dominance condition w.r.t.~the remaining actions of each player. Now observe that if an action $a_i$ at some information set $\phi_i$ is strictly dominated, then it cannot be a best compromise at this information set. This is because the (mixed) action that strictly dominates $a_i$ will achieve a strictly lower loss for each belief, and hence its maximal loss will be strictly smaller. Thus, a strictly dominated action cannot be a part of a PCE. This argument can be iterated, so any iterated strictly dominated action cannot be a part of a PCE. 

\begin{figure}[!t]
\includegraphics[width=250pt]{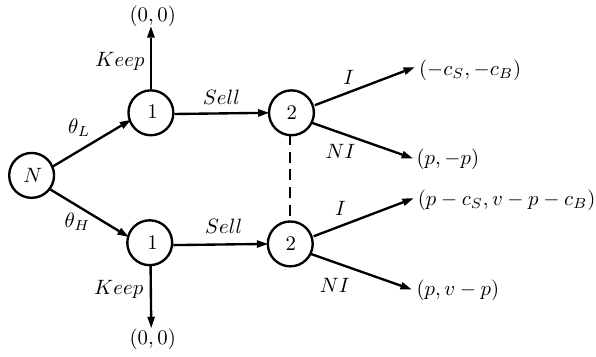}
\caption{Lemon Market with Quality Inspections}\label{F:A}
\end{figure}

\subsection{Example}\label{s:ex-lemons}

The following example illustrates the PCE concept and highlights the difference from PBE.

A seller has a car whose quality is either low ($\theta_L$) or high ($\theta_H$). She observes the quality of the car and decides whether to offer it for sale at a fixed price $p$ or to keep it. If the car is offered for sale, a potential buyer decides whether to buy it without observing its quality, or to demand a costly inspection that reveals the quality. The cost of the inspection is $c_S$ for the seller and $c_B$ for the buyer. The low-type car is worth zero and the high-type car is worth $v$ to the buyer. The seller's value of either type of the car is zero. The payoff parameters $v, p, c_S$, and $c_B$ are commonly known and  assumed to satisfy
\beq\label{e-assum-lemon}
v-c_B\ge p>c_S\ge 2c_B>0.
\eeq
Consequently, the buyer either decides to purchase the car without demanding the inspection, or to ask for the inspection, in which case he buys if it is high quality and does not buy otherwise. So the only nontrivial choice of the buyer is whether or not to inspect the car. The game is summarized in Figure \ref{F:A}, where $N$, $S$, and $B$ denote Nature, the seller, and the buyer, respectively, and $I$ and $NI$ denote the buyer's choice to inspect or not to inspect the car, respectively. 

Suppose that the buyer is uncertain about the quality of the car and holds multiple priors about whether the quality is low or high. For convenience, beliefs are summarized by the probability that the car quality is high. Multiple priors can arise in many different ways. The buyer might come up with different competing scenarios for explaining what car is being offered, each scenario possibly leading to a different prior. In this case the buyer's set of priors is $\Pi_B=\{\pi_0,\pi_1,..,\pi_K\}$. The buyer might have some vague understanding of the likelihoods, for instance that a high quality car is more likely than a low quality car. In this case $\Pi_B=\{\pi: \pi\ge 1/2\}$. The buyer might wish to base her purchasing behavior on the possibility that the quality might be high and that it might be low and does not wish to base her behavior on any likelihoods. This motivates setting $\Pi_B=\{0,1\}$. In what follows we analyze the case $\Pi_B=\{\pi_0,\pi_1,..,\pi_K\}$ where $0=\pi_0<\pi_1<...<\pi_{K-1}<\pi_K=1$. In particular, we are including the two degenerate priors where the quality is low and high with certainty.

The seller has two information sets with single decision nodes. So the seller's beliefs are trivial in these information sets. The buyer has a single information set denoted by $\phi_B$ that contains two decision nodes. The buyer updates each of his priors to obtain a set of beliefs $\{\beta_{\phi_B}(\pi_k)\}_{k=0,1,...,K}$ at this information set, where $\beta_{\phi_B}(\pi_k)$ is the buyer's belief mapping.

We now characterize the PCE of this game. A PCE is summarized by a triple $(\sigma^*_S,\sigma^*_B, \beta^*_{\phi_B})$, where $\sigma^*_S(\theta)$ is the probability that the seller sells the car conditional on its type $\theta\in\{\theta_L,\theta_H\}$, $\sigma^*_B$ is the probability that the buyer inspects the car, and $\beta^*_{\phi_B}$ is the buyer's belief mapping.

Because the seller has no ambiguity, the best compromise for the seller is her best response. When the quality is high, the seller prefers to sell the car, as this is a strictly dominant strategy, so $\sigma_S^*(\theta_H)=1$. When the quality is low, the seller prefers to sell the car if the probability of inspection $\sigma^*_B$ is low enough, and to keep the car otherwise, specifically,
\beq\label{e-BR-S-lemon}
 \sigma_S^*(\theta_L)\in\begin{cases}
\{1\} & \text{if $\sigma^*_B<p/(p+c_S),$}\\ 
[0,1] & \text{if $\sigma^*_B=p/(p+c_S),$}\\ 
\{0\} & \text{if $\sigma^*_B>p/(p+c_S).$} 
 \end{cases}
\eeq
The buyer can be ambiguous, because he can have multiple beliefs in his information set. In order to be a part of a PCE, the buyer's belief mapping must be {\it consistent} with the strategy $\sigma_S^*$ of the seller. Specifically, each prior $\pi_k\in\Pi_B$ is transformed into a belief $\beta_{\phi_B}^*(\pi_k)$ using Bayes' rule whenever possible. Given $\sigma_S^*(\theta_L)\in[0,1]$ and $\sigma^*_S(\theta_H)=1$, the consistent belief mapping is 
\beq\label{e:Bayes}
\beta_{\phi_B}^*(\pi_k)=\frac{\pi_k}{\pi_k+(1-\pi_k)\sigma^*_S(\theta_L)}
\eeq
for each $\pi_k\in \Pi_B$, except for the case of $\pi_k=0$ and $\sigma_S(\theta_L)=0$ where the above Bayes' posterior is undefined. In this case, any belief $\beta_{\phi_B}^*(0)\in[0,1]$ is consistent with the seller's strategy. The set of buyer's beliefs in her information set $\phi_B$ is given by
\[
B_{\phi_B}=\{\beta^*_{\phi_B}(\pi_k)\}_{k=0,1,...,K}.
\]

For each belief $b$, denoting the probability that the car has high quality, the buyer's optimal choice is to inspect when $b<1-c_B/p$, and not to inspect when $b> 1-c_B/p$ (being indifferent when $b=1-c_B/p$). The buyer's loss for a given belief $b$ from a strategy $\sigma_B$ describes how much more payoff the buyer could have obtained if he optimized his choice under this belief. For $b\le 1-c_B/p$ this loss is given by
\begin{align*}
\big[b(v-p)-c_B\big]&-\big[(b(v-p)-c_B)\sigma_B+(bv-p)(1-\sigma_B)\big]\\
&=((1-b)p-c_B)(1-\sigma_B).
\end{align*}
For $b\ge 1-c_B/p$ this loss is given by
\begin{align*}
\big[bv-p\big]&-\big[(b(v-p)-c_B)\sigma_B+(bv-p)(1-\sigma_B)\big]\\
&=(c_B-(1-b)p)\sigma_B.
\end{align*}
The buyer's {\it maximum loss} among his different beliefs of choosing the inspection strategy $\sigma_B$ is thus
\begin{multline}
l_S(\sigma_B|\phi_B,\sigma_S^*,\beta^*_{\phi_B})=\\\max_{b\in B_{\phi_B}} \max\Big\{((1-b)p-c_B)(1-\sigma_B),(c_B-(1-b)p)\sigma_B\Big\}.\label{e:loss-lemon}
\end{multline}
Intuitively, the buyer who has multiple beliefs and anticipates the seller to follow her equilibrium strategy worries about two possible situations. It could be that the probability of high quality is high, so the buyer loses payoff by inspecting. The greatest such loss occurs when the belief is the highest, $b=\beta^*_{\phi_B}(1)$. Alternatively, it could be that the probability of high quality is low, so the buyer is losing payoff by not inspecting. The greatest such loss occurs when the belief is the lowest, so $b=\beta^*_{\phi_B}(0)$. The buyer thus chooses the best-compromise inspection strategy $\sigma_B^*$ that balances these two losses.

\begin{proposition}\label{p:lemon}
A profile $(\sigma^*_S,\sigma^*_B,\beta^*_{\phi_B})$ is a perfect compromise equilibrium if and only if the seller and buyer's strategies $\sigma^*_S$ and $\sigma^*_B$ are given by
\begin{align*}
\sigma^*_S(\theta_L)&=0, \ \ \sigma^*_S(\theta_H)=1, \ \ \sigma^*_B=1-\frac{c_B}{(1-b_0)p},
\end{align*}
and the buyer's belief mapping $\beta^*_{\phi_B}$ is given by
\[
\beta^*_{\phi_B}(0)=b_0, \ \ \beta^*_{\phi_B}(\pi_k)=1 \ \text{for each $k=1,...,K$},
\]
where $b_0$ satisfies
\[
b_0\in\left[0,1-\frac{c_B}{c_S}-\frac{c_B}{p}\right].
\]
\end{proposition}

The proof is in Appendix \ref{s:lemon-proof}.

Our prediction based on the PCE is as follows. High type cars are sold
and low type cars are not. While a rational buyer would infer in this
situation that the quality of the car is high, under a PCE the buyer inspects the car with a positive probability. This happens because the buyer remains doubtful about the quality and needs to find a compromise given her pair of beliefs $\{b_0,1\}$. The specific probability of inspection is the one that yields the best compromise. Note that the buyer also inspects a car with a positive probability under a PBE with a nondegenerate prior. Under a PBE the inspection probability is the one that makes the low-type seller indifferent between selling or not selling.

To see why low type cars are not sold in a PCE, observe that if they were sold
with some positive probability, then the buyer's set of beliefs would include the two degenerate beliefs, $\beta^*_{\phi_B}(0)=0$ and $\beta^*_{\phi_B}(1)=1$. The buyer would then choose the best compromise strategy when facing these two extreme cases. This would then lead to an inspection probability so large that the low type seller would prefer not sell the car. But this would be incompatible with low type cars being sold with positive
probability.

To see why the buyer must have more than a single belief, thus remaining doubtful, consider the following arguments. If there is a single belief, it must be that the car is almost certainly of high quality, as only high quality cars are sold. In this case, the buyer's best compromise is not to inspect the car. But then, the seller would have strictly preferred to sell the car of low quality. 

Note that the belief $b_0=\beta^*_{\phi_B}(0)$ obtained under the degenerate prior $\pi_0=0$, where the car almost certainly has low quality, is not pinned down by Bayes's rule. This is because it is formed in an out-of-equilibrium event, in which the low-type car is offered for sale. As long as $b_0$ is not too high, the buyer's best-compromise inspection probability is high enough to keep the low type seller out of the market.

\section{Examples}\label{s:examples}

We illustrate our solution concept in a few economic examples that are prominent in the literature. 
We consider Cournot and Bertrand duopoly, public good provision, Spence's job market signaling, bilateral trade with common value, and forecasting. The examples presented in this section are self-contained as they do not require knowledge of the formalities presented in Section \ref{s:pce}.

We are particularly interested in understanding strategic play under uncertainty when the players cannot or are unwilling to assess the likelihood of different states of the world at the beginning of the game. Formally, players can only have degenerate priors that put probability one on a single state of the world. We call this {\it genuine ambiguity}. 

Apart from forecasting, the examples presented below deal with genuine ambiguity.
Therein, ambiguity is specified in terms of bounds on what the players do not know. Probability distributions do not play a role. Players do not have beliefs. Instead, they speculate about which state is true or about what decision node within an information set they are at. In addition, we assume that players do not use mixed strategies. They search among their pure strategies for a best compromise. Thus we perform a strategic analysis without using probabilities.

The section concludes with an example of forecasting that demonstrates the interplay between ambiguity and noise, where multiple priors over one parameter meet a single prior over another parameter.

\subsection{Cournot Duopoly with Unknown Demand}\label{s:cournot}

We investigate how two firms compete in quantities when neither firm knows the demand. 

There are two firms that produce a homogeneous good. For clarity of exposition, we assume that there are no costs of production.
Each firm $i=1,2$ chooses a number of units $q_i\ge 0$ to produce. Choices are made simultaneously. The firms face an inverse demand function $P(q_1+q_2)$. Firm $i$'s profit is given by
\[
u_i(q_i,q_{-i}; P)= P(q_i+q_{-i})q_i,  \ \ i=1,2.
\]
Neither firm knows the inverse demand $P$, but they know that it belongs to a set $\mathcal P$ given as follows. Let 
\[%\beq\label{e:demand-bounds}
\begin{split}
&\ul P(q)=\ul a-\ul b q \ \   \text{and} \ \ \ol P(q)=\ol a-\ol b q, \ \ \text{where} \ \ \ol a\ge\ul a>0 \ \  \text{and} \ \ \ol a/\ol b\ge \ul a/\ul b>0.
\end{split}
\]%\eeq
Let $\mathcal P$ be the set of inverse demand functions that satisfy
\beq\label{e:demand}
\begin{split}
&\text{$P(q)$ is continuously differentiable in $q$,}\\
& \ul P(q) \le  P(q)\le  \ol P(q) \quad\text{and}\quad \ul  P'(q)\le  P'(q)\le \ol  P'(q).
\end{split}
\eeq

A firm $i$'s {\it maximum loss} of choosing quantity $q_i$ when the other firm chooses quantity $q_{-i}$ is given by
\[
l_i(q_i,q_{-i})=\sup_{P\in \mathcal P} \left(\sup_{q'_i\ge 0} u_i(q'_i,q_{-i}; P)-u_i(q_i,q_{-i}; P)\right).
\]
The maximum loss describes how much more profit firm $i$ could have obtained if it had known the inverse demand $P$ when anticipating that the other firm produces $q_{-i}$. Firm $i$'s {\it best compromise} given a choice $q^*_{-i}$ of the other firm is a quantity $q^*_i$ that achieves the lowest maximum loss, so
\[
q^*_i\in\argmin_{q_i\ge 0} l_i(q_i,q_{-i}).
\]
A strategy profile $(q^*_1,q^*_2)$ is a {\it perfect compromise equilibrium} if each firm chooses a best compromise given the choice of the other firm.

\begin{proposition}\label{p:cournot}
There exists a unique perfect compromise equilibrium. In this PCE, the strategy profile $(q^*_1,q^*_2)$  is given by
\beq\label{e-cournot-eq}
q_i^*=\frac{1}{3 \left(\sqrt{\ul b}+\sqrt{\ol b}\right)}\left(\frac{\ul a}{\sqrt{\ul b}}+\frac{\ol a}{\sqrt{\ol b}}\right), \ \ i=1,2.
\eeq
The associated maximum losses are
\beq\label{e-cournot-com}
l_i(q^*_i,q^*_{-i})=\frac{(\ul a\ol b-\ol a\ul b)^2}{4 \ul b\ol b \left(\sqrt{\ul b}+\sqrt{\ol b}\right)^2}, \ \ i=1,2.
\eeq
\end{proposition}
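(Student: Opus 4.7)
The plan is to verify that the symmetric profile $(q^*,q^*)$ given by \eqref{e-cournot-eq} is a PCE with losses \eqref{e-cournot-com} and that it is the unique PCE. Since both firms' quantities are common knowledge and only $P\in\mathcal P$ is uncertain, firm $i$'s best compromise against $q_{-i}$ minimizes $l_i(q_i,q_{-i})=\sup_{P\in\mathcal P}L_P(q_i;q_{-i})$, where $L_P(q_i;q_{-i})=\max_{q'_i\ge 0}P(q'_i+q_{-i})q'_i-P(q_i+q_{-i})q_i$ is the loss under a specific demand. By the symmetry of the primitives, I first show that $q_i=q^*$ is the unique minimizer of $l_i(\cdot,q^*)$.

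The first step computes losses at the two linear extremes $\ul P$ and $\ol P$ against $q_{-i}=q^*$. A standard best-response computation and completing the square give the compact identity $L_{\ul P}(q^*;q^*)=(\ul a-3\ul b q^*)^2/(4\ul b)$, and analogously for $\ol P$. The $q^*$ in \eqref{e-cournot-eq} is precisely the unique positive value satisfying $\sqrt{\ol b}(\ul a-3\ul b q^*)=-\sqrt{\ul b}(\ol a-3\ol b q^*)$, which equates the two losses at the common value $\ell^*$ in \eqref{e-cournot-com}. Each $L_{\cdot}(\cdot;q^*)$ is a strictly convex quadratic with a distinct Cournot best response as minimizer, and these minimizers bracket $q^*$; hence $q\mapsto\max\{L_{\ul P}(q;q^*),L_{\ol P}(q;q^*)\}$ is strictly convex with a unique minimum at $q^*$ equal to $\ell^*$. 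Since $\ul P,\ol P\in\mathcal P$, this already yields $l_i(q;q^*)>\ell^*$ for every $q\ne q^*$.

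The main obstacle is the matching upper bound $l_i(q^*;q^*)\le\ell^*$: a priori some nonlinear $P\in\mathcal P$ might produce a loss exceeding that of both linear extremes. Let $y^*_P$ denote firm $i$'s best-response total quantity under $P$ against $q_{-i}=q^*$, and split on the sign of $y^*_P-2q^*$. When $y^*_P\ge 2q^*$, the slope bound $P'\le-\ol b$ implies $P(y^*_P)\le P(2q^*)-\ol b(y^*_P-2q^*)$, and the pointwise bound $P(2q^*)\le\ol a-2\ol b q^*$ substitutes into $L_P(q^*;q^*)=P(y^*_P)(y^*_P-q^*)-P(2q^*)q^*$ to give
\[
L_P(q^*;q^*)\le s\bigl(\ol a-3\ol b q^*-\ol b s\bigr),\qquad s=y^*_P-2q^*\ge 0,
\]
a downward-opening quadratic in $s$ whose maximum equals $\ell^*$ and is attained by $P=\ol P$. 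The case $y^*_P\le 2q^*$ is symmetric, using $P'\ge-\ul b$ and $P(2q^*)\ge\ul a-2\ul b q^*$ to obtain $L_P(q^*;q^*)\le s((3\ul b q^*-\ul a)-\ul b s)$ with $s=2q^*-y^*_P\ge 0$, again maximized at $\ell^*$ and attained by $P=\ul P$. Combining both cases, $l_i(q^*;q^*)=\ell^*$, so $q^*$ is firm $i$'s unique best compromise against $q^*$; by symmetry $(q^*,q^*)$ is a PCE with losses \eqref{e-cournot-com}.

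For global uniqueness, the same case-split adapts to arbitrary $q_{-i}$, showing $l_i(q_i,q_{-i})=\max\{L_{\ul P}(q_i;q_{-i}),L_{\ol P}(q_i;q_{-i})\}$ and pinning down firm $i$'s best compromise as the unique root of $L_{\ul P}(q_i;q_{-i})=L_{\ol P}(q_i;q_{-i})$ lying between the two firm-$i$ Cournot best responses. This defines a single-valued reaction function whose unique fixed point is $q^*$, ruling out asymmetric PCE.
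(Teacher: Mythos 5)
Your proof is correct and follows the same overall route as the paper's: reduce the supremum over $\mathcal P$ to the two linear extremes $\ul P$ and $\ol P$, equalize the resulting losses to characterize the best compromise, and solve the linear reaction-function system. The one place where you genuinely add something is the reduction step itself. The paper justifies $\sup_{P\in\mathcal P}\Delta u_i = \max\{\Delta u_i(\cdot;\ul P),\Delta u_i(\cdot;\ol P)\}$ only by noting that the best-response quantity is bracketed by the two extreme best responses and that the profit function is concave at the extremes; your case-split on the sign of $y^*_P-(q_i+q_{-i})$, combined with the slope bound to control $P(y^*_P)$ and the pointwise bound to control $P(q_i+q_{-i})$, yields an explicit downward-opening quadratic in $s$ whose maximum is exactly the extreme-case loss. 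That is a complete and self-contained verification of the step the paper asserts, and it is the right argument (the signs work out because the coefficient of $P(q_i+q_{-i})$ is $\pm s$ and the relevant vertex is at $s\ge 0$ precisely because $q^*$ lies between the two Cournot best responses, which you should note follows from $\ol a/\ol b\ge\ul a/\ul b$). The remainder — strict convexity of the max of two quadratics giving uniqueness of the best compromise, and the linear reaction functions with slope $-1/2$ giving a unique fixed point — matches the paper.
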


The proof is in Appendix \ref{s:p1}.

Let us discuss the strategic concerns underlying the PCE in this game. Each firm $i$, when facing unknown inverse demand and deciding about the quantity to produce, worries about two possible situations. It could be that the inverse demand is actually very high, so the firm is losing profit by producing too little. The greatest such loss occurs when the inverse demand is the highest, so $ P=\ol P$. Alternatively, it could be that the inverse demand is actually very low, so the firm is losing profit by producing too much. The greatest such loss occurs when the inverse demand is the lowest, so $ P=\ul P$. The firm thus chooses the best compromise $q_i^*$ that balances these two losses, assuming that the other firm follows its equilibrium strategy $q_{-i}^*$.

\begin{remark}
It is generally intractable to find a PBE in this game with such a rich set of possible inverse demand functions. It can only be done under very specific priors about the inverse demand. For example, PBE can be found if a prior describes the uncertainty about the parameters of the linear inverse demand function $ P(q)=a-bq$ \citep{Vives}. 
\end{remark}

\begin{remark}
Our equilibrium analysis can shed light on how the firms' behavior changes in response to increasing uncertainty. For comparative statics, let us consider as a benchmark a linear inverse demand function $P_0(q)=a_0-b_0q$. We normalize constants $a_0$ and $b_0$ so that the monopoly profit is equal to 1, that is,
\[
\sup_{q\ge 0} (a_0-b_0 q)q=\frac{a^2_0}{4b_0}=1.
\]
Suppose that there is a small uncertainty. Specifically, for $\eps>0$ let $P(q)$ satisfy \eqref{e:demand} where
\[
\ul P(q)=\left(1-\frac{\eps}2\right)a_0-\left(1+\frac{\eps}2\right)b_0q \quad\text{and}\quad\ol P(q)=\left(1+\frac{\eps}2\right)a_0-\left(1-\frac{\eps}2\right)b_0q.
\]
Denote by $q^\eps=(q^\eps_1,q^\eps_2)$ the strategies of the PCE as given by Proposition \ref{p:cournot}.
We then obtain
 \[
\frac{\df q^\eps_i}{\df \eps} =\frac{2\eps}{3a_0}+O(\eps^3)>0.
 \]
So the firms optimally respond to a growing uncertainty about the demand by increasing their output, and do so at an increasing rate as $\eps$ grows. Next, consider the associated maximum losses as shown in \eqref{e-cournot-com}. Then
\[
l_i(q^\eps_i,q^\eps_{-i})=\eps^2+O(\eps^4), \ \ i=1,2.
\]
So the maximum losses in the PCE increase very slowly as uncertainty increases.
Moreover, if $\eps=0.1$, then $l_i(q^\eps_i,q^\eps_{-i})\approx 0.01$. So the firms lose no more than about 1\% of the maximum profit due to not knowing the demand.
\end{remark}

\subsection{Bertrand Duopoly with Private Costs}
We now consider how two firms compete in prices when the cost of the rival firm is unknown.

There are two firms that produce a homogeneous good. Each firm $i=1,2$ chooses a price $p_i$. Choices are made simultaneously. The consumers only buy from the firm that offers a lower price. In particular, the quantity that firm $i$ sells is given by
\[
q_i(p_i,p_{-i})=\begin{cases}
Q(p_i), & \text{if $p_i<p_{-i}$},\\
Q(p_i)/2, & \text{if $p_i=p_{-i}$},\\
0, & \text{if $p_i>p_{-i}$},
\end{cases}
\]
where $Q(p)$ is the demand function. For clarity of exposition we assume that the demand function is given by
\[
Q(p)=\max\left\{\frac{a-p}b,0\right\}
\] 
The cost of producing $q_i$ units is $ c_i q_i$. Each firm $i$'s profit is given by
\[
u_i(p_i,p_{-i}; c_i)=(p_i- c_i)q_i(p_i,p_{-i}),  \ \ i=1,2.
\]

Each firm $i$ knows her own marginal cost but not that of the other firm, and it is common knowledge that
\[
c_1, c_2\in [\ul c,\ol c], \ \ \text{where $0\le \ul c\le\ol c\le a/2$.}
\]
A firm $i$'s pricing strategy $s_i(c_i)$ describes its choice of the price given its marginal cost $c_i$.

For each marginal cost $c_i$, firm $i$'s {\it maximum loss} of choosing a price $p_i$ when facing pricing strategy $s_{-i}$ of the other firm is given by
\[
l_i(p_i,s_{-i};c_i)=\sup_{c_{-i}\in [\ul c,\ol c]} \left(\sup_{p'_i\ge 0} u_i(p'_i,s_{-i}(c_{-i}); c_i)-u_i(p_i,s_{-i}(c_{-i}); c_i)\right).
\]
The maximum loss describes how much more profit $i$ could have obtained if it had known the other firm's marginal cost $c_{-i}$, anticipating the other firm to follow the pricing strategy $s_{-i}$. Firm $i$'s {\it best compromise} given $c_i$ is a pricing strategy $s^*_i(c_i)$ that achieves the lowest maximum loss for a given strategy $s^*_{-i}$ of the other firm:
\[
s^*_i(c_i)\in\argmin_{p_i\ge 0} l_i(p_i,s^*_{-i};c_i).
\]
A strategy profile $(s^*_1,s^*_2)$ is a {\it perfect compromise equilibrium} if each firm $i$ chooses a best compromise given its marginal cost $c_i$ when facing the strategy $s^*_{-i}$ of the other firm.

\begin{proposition}\label{p:bertrand}
There exists a unique perfect compromise equilibrium. In this PCE, the pricing strategies are given by
\beq\label{e-bertrand-eq}
s_i^*( c_i)=\frac{1}{2}\left(a+ c_i-\sqrt{(a-\ol c)^2+(\ol c- c_i)^2}\right), \ \ i=1,2.
\eeq
The associated maximum losses are
\beq\label{e-bertrand-com}
l_i(s^*_i(c_i),s^*_{-i},c_i)=\frac{(a-\ol c)(\ol c-c_i)}{2}\le \frac{(a-\ol c)(\ol c-\ul c)}{2}, \ \ i=1,2.
\eeq
\end{proposition}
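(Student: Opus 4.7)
The plan is to verify by direct computation that the symmetric profile in \eqref{e-bertrand-eq} is a PCE and then argue uniqueness. I begin with a symmetric candidate $p^*(\cdot)$ assumed to be nondecreasing in $c$ with $p^*(\ol c)=\ol c$; both properties will be confirmed once the explicit formula is derived. Because $p^*(c_{-i})\le \ol c\le a/2\le (a+c_i)/2$, the full-information best response of firm $i$ against an opponent cost $c_{-i}$ is to undercut $p^*(c_{-i})$ slightly, yielding supremal profit $(p^*(c_{-i})-c_i)Q(p^*(c_{-i}))$ when $p^*(c_{-i})>c_i$, and zero otherwise. This pins down the ``best-ex-post'' term in the loss.

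The core step is to decompose firm $i$'s maximum loss at a candidate $p_i$. For $p_i$ in the range of $p^*$, let $\tilde c=(p^*)^{-1}(p_i)$. For $c_{-i}<\tilde c$ firm $i$ is outbid and earns zero; the loss $(p^*(c_{-i})-c_i)Q(p^*(c_{-i}))$ is supremal as $c_{-i}\uparrow\tilde c$, contributing
\[
L_1(p_i,c_i)=(p_i-c_i)Q(p_i).
\]
For $c_{-i}>\tilde c$ firm $i$ wins but undercuts by more than necessary; since $(p-c_i)Q(p)$ is increasing on $p\le (a+c_i)/2$, the loss is maximized at $c_{-i}=\ol c$, contributing
\[
L_2(p_i,c_i)=(\ol c-c_i)Q(\ol c)-(p_i-c_i)Q(p_i).
\]
Over the relevant range $L_1$ is strictly increasing and $L_2$ strictly decreasing in $p_i$, so $\max\{L_1,L_2\}$ is minimized exactly where the two coincide, i.e.\ at $2(p_i-c_i)(a-p_i)=(\ol c-c_i)(a-\ol c)$. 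Solving this quadratic and selecting the root consistent with $p_i\le (a+c_i)/2$ yields precisely \eqref{e-bertrand-eq}; substituting back into either $L_1$ or $L_2$ gives \eqref{e-bertrand-com}, with the stated upper bound following from $c_i\ge \ul c$.

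To close existence I would check the self-consistency loop: direct substitution gives $p^*(\ol c)=\ol c$ and differentiation shows $p^*$ is strictly increasing, confirming the monotone-invertibility used in the decomposition, and prices $p_i$ outside $[p^*(\ul c),\ol c]$ are excluded since one of $L_1,L_2$ then exceeds the balanced value by boundary monotonicity. For uniqueness, I would argue that in any PCE the highest-cost type must price at cost (else the loss against an opponent whose price can sit arbitrarily close below cannot be balanced), that each $\hat p_i(\cdot)$ must be nondecreasing (by a single-crossing argument trading off winning probability against markup), and that the same balancing equation then pins $\hat p_i(c_i)$ down type-by-type, forcing coincidence with \eqref{e-bertrand-eq}. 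The main obstacle is the discontinuity of Bertrand profits at ties: the two worst-case states $c_{-i}\uparrow\tilde c$ and $c_{-i}=\ol c$ are not attained by actual best responses of the opponent, so one must verify carefully that the suprema defining the loss functional are nonetheless the limiting values $L_1$ and $L_2$ written above, and that the minimax value is unaffected by this non-attainment.
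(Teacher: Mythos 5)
Your proposal follows essentially the same route as the paper's proof: decompose the maximum loss into the "outbid by a marginally lower price" term $(p_i-c_i)(a-p_i)/b$ and the "priced too low against the highest-price opponent" term, observe that one is increasing and the other decreasing in $p_i$ on the relevant range, and balance them via $2(p_i-c_i)(a-p_i)=(\ol c-c_i)(a-\ol c)$, which yields \eqref{e-bertrand-eq}. Your extra care about the non-attainment of the suprema at ties and your explicit fixed-point/monotonicity steps for uniqueness are refinements of, not departures from, the paper's argument (and your substitution correctly produces the loss with a $1/b$ factor, which \eqref{e-bertrand-com} appears to omit).
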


The proof is in Appendix \ref{s:p2}.

Let us discuss the strategic concerns underlying the PCE in this game. Each firm $i$, when deciding about the price $p_i>c_i$ and facing an unknown cost of the other firm, worries about two possible situations. It could be that the other firm chooses a weakly lower price $p_{-i}\le p_i$. Thus, firm $i$ could have obtained more profit by undercutting $p_{-i}$. The greatest such loss occurs when the other firm's price marginally undercuts $p_i$. Alternatively, it could be that the other firm chooses a higher price, $p_{-i}>p_i$. Thus, unless $p_i$ is the profit maximizing price for the monopoly, firm $i$ is losing profit by charging too little. The greatest such loss occurs when the other firm's cost is the highest possible, $\ol c$. The firm thus chooses the best compromise $s_i^*( c_i)$ that balances these two losses, assuming that the other firm follows its equilibrium strategy.

We find that the PCE price $s_i^*( c_i)$ is strictly increasing in $c_i$ and lies strictly above the marginal cost $c_i$ whenever $c_i<\ol c$. Moreover, $s_i^*(\ol  c)=\ol c$. So, any sale with the cost below $\ol c$ leads to a positive profit. The fact that the equilibrium price cannot not lie above $\ol c$ is intuitive. It is common knowledge that the costs are at most $\ol c$. So if a firm charges a price above $\ol c$, the other firm would undercut it. Note also that the largest equilibrium price cannot lie below $\ol c$. This is because a firm with cost $\ol c$ will never charge a price below $\ol c$.

Note that the lowest equilibrium price $s_i^*(\ul  c)$ is strictly positive, even if $\ul c=0$. This is because when the price is very low,
then the potential loss due to not undercutting the other firm is small, while the potential loss due to not setting a price much higher is large. This has an upward effect on prices.

\begin{remark}
It is generally intractable to find a PBE in this application under any reasonable prior, even in this simplest setting with linear demand and constant marginal costs. The PBE strategy profile for this simplest setting is implicitly defined by a differential equation with no closed form solution \citep[see][]{Spulber1995bertrand}. 
\end{remark}

\begin{remark}
As in Section \ref{s:cournot}, our equilibrium analysis can shed light on how the firms' behavior changes in response to increasing uncertainty. For comparative statics, let us consider as a benchmark marginal cost $ c_0=a/4$ (recall that we require $0\le  c_i\le a/2$, so $ c_0=a/4$ is the midpoint). We normalize the constants $a$ and $b$ of the demand function $Q(p)=(a-p)/b$ so that the monopoly profit is equal to 1, that is,
\[
\sup_{p\ge 0} (p- c_0)\frac{a-p}b=\frac{(a- c_0)^2}{4b}=1.
\]
Suppose that there is a small uncertainty. Specifically, for $0<\eps<1$ let $ c_i\in[\ul c,\ol c]$, $i=1,2$, where
\[
\ul c=\left(1-\frac{\eps}2\right) c_0\quad\text{and}\quad\ol c=\left(1+\frac{\eps}2\right) c_0.
\]
Denote by $s^\eps=(s^\eps_1,s^\eps_2)$ the PCE strategy profile as given by Proposition \ref{p:bertrand}.
We then obtain
 \[
\frac{\df s^\eps_i(c_i)}{\df \eps} =\frac{(a+c_i-2\ol c)c_0}{4\sqrt{(a-\ol c)^2+(\ol c-c_i)^2}}>0,
 \]
because, using our assumptions on the parameters,
\[
a+c_i-2\ol c\ge a-2\ol c=1-2\left(1+\frac{\eps}2\right) c_0=\frac 1 4(2-\eps)>0.
\]
So the firms optimally respond to the growing uncertainty about the demand by increasing their prices. They become less competitive. Next, consider the associated maximum losses as shown in \eqref{e-bertrand-com}. Then
\[
l_i(s^\eps_i(c_i),s^\eps_{-i},c_i)\le \frac {3\eps}{32}-\frac{\eps^2}{64}, \ \ i=1,2.
\]
So the maximum losses are small. For example, if $\eps=0.1$, then the maximum losses are bounded by $0.01$. So the firms lose no more than about 1\% of the maximum profit due to not knowing the cost of the other firm.
\end{remark}

\subsection{Public Good Provision}\label{s:public}
Here we consider a problem of public good provision where the public good is funded by contributions of its beneficiaries. We compare three different mechanisms that regulate the beneficiaries' payments. For each of these mechanisms, we investigate how the beneficiaries find compromises about how much to contribute towards the public good provision. Note that in this setting the Vickrey-Clarke-Groves mechanism is not feasible, because it requires the public good to be externally subsidized.

There are $n$ agents, each has a private value $v_{i}\in [0,\bar v]$ for a public good. Agents know their own values of the good, but not those of the others.
Each agent $i$ chooses how much to contribute for the public good provision. Let  $x_{i}\in [0,\bar v]$ be agent $i$'s contribution. The agents make their choices simultaneously.

A commonly known cost of providing the public good is $c>0$. To avoid considering multiple cases, we assume that this cost is not too high, specifically,
\begin{equation}\label{E:A1}
\frac{c}{n-1}\le \frac{\bar v}2.
\end{equation}

The payoffs are as follows. If the sum of the contributions does not cover the cost, so $\sum_{i=1}^{n}x_{i}< c$, then the public good is not provided and the agents' contributions are returned to them. In this case each agent $i$ obtains zero payoff. Otherwise, if $\sum_{i=1}^{n}x_{i}\ge c$, then the public good is provided, and each agent obtains the value of the good net of the contribution. In addition, the agents may be refunded the excess contribution, $\sum_{i=1}^{n}x_{i}-c$. The payoff of each agent $i$ is
\[
v_i-x_i+r_i(x),
\]
where $r_i(x)$ is a refund to agent $i$ that depends on the profile of contributions $x=(x_1,...,x_n)$. 
For all $x$ such that $\sum_{i=1}^n x_i\ge c$, the refunds $r_i(x)$ must satisfy:

(a) $r_i(x)\ge 0$ for each $i$,  so agents do not pay more than their contributions;

(b) $\sum_{i=1}^n (x_i-r_i(x))\ge c$, so the net payments cover the cost of the public good;

(c) $r=(r_1,...,r_n)$ is symmetric, so the agents are treated ex-ante equally.

We compare three simple refund rules.

\noindent (i) {\it No-refunds rule.} The excess contribution is not refunded to the agents, so
\begin{equation}\label{t-simple}
r_i(x)=0, \ \ i=1,...,n.
 \end{equation}
 (ii) {\it Equal-split rule.} The excess contribution is divided equally among to the agents, so 
\begin{equation}\label{t-additive}
 r_i(x)=\frac 1 n\left(\sum\nolimits_{j=1}^n x_j-c\right), \ \ i=1,...,n.
 \end{equation}
(iii) {\it Proportional rule.} The excess contribution is divided proportionally to the agents' individual contributions, so 
\begin{equation}\label{t-prop}
r_i(x)=\left(1-\frac{c}{\sum_{j=1}^n x_j}\right)x_i, \ \ i=1,...,n.
 \end{equation}

Let $s_i(v_i)$ be a strategy of agent $i$, so $x_i=s_i(v_i)$ specifies the contribution of agent $i$ whose private value is $v_i$. We restrict attention to strategies that are symmetric and undominated. Specifically, we assume that
\beq
\text{$s_i(v)=s_j(v)$ and $s_i(v)\le v$ for all $v\in[0,\ol v]$ and all $i,j\in\{1,...,n\}$.}\label{e-A2}
\eeq
The assumption that the strategies are symmetric is substantive, as we rule out potential asymmetric equilibria. The assumption that the strategies are undominated is inconsequential for the results and introduced for notational convenience.

An agent $i$'s {\it maximum loss} of choosing contribution $x_i$ when the other agents choose a profile of contributions $s_{-i}(v_{-i})$  describes how much more payoff agent $i$ could have obtained if she had known the true values of everybody else, anticipating that they follow their strategies. To determine the maximum loss, observe that agent $i$ worries about two possible situations. It could be that the total contribution is marginally below $c$, so  $x_i+\sum_{j\ne i} s_j(v_j)=c-\eps$ for a small $\eps>0$. The good is not provided, but had $i$ contributed $\eps$ more it would have been provided. As $\eps\to 0$, agent $i$'s loss is $v_i-x_i$. 
Alternatively, it could be that all other agents contribute enough to cover $c$, so $\sum_{j\ne i} s_j(v_j) \ge c$. Thus the agent could have contributed nothing and still received the good. In this case the loss is the amount of contribution net of the refund, $x_i-r_i(x_i,s_{-i}(v_{-i}))$. Agent $i$'s maximum loss is thus given by
\[
l_i(x_i,s_{-i};v_i)=\sup_{v_{-i}\in[0,\ol v]^{n-1}}\max\left\{v_i-x_i,x_i-r_i(x_i,s_{-i}(v_{-i}))\right\}.
\]
Agent $i$'s {\it best compromise} given $v_i$ is a strategy $s^*_i(v_i)$ that achieves the lowest maximum loss for a given strategy profile $s_{-i}$ of the other agents:
\[
s_i^*(v_i)\in \argmin_{x_i\in[0,v_i]} l_i(x_i,s_{-i};v_i).
\]
A strategy profile $s^*=(s_1^*,...,s_n^*)$ is a {\it perfect compromise equilibrium} if each agent $i$ chooses a best compromise given her value $v_i$ when facing the strategy profile $s^*_{-i}$ of the other agents.

In this application we are interested in how the agents' equilibrium behavior and total efficiency (welfare) changes in PCE induced by different refund rules.
We measure the efficiency of a strategy profile $s$ by the maximum welfare loss as compared to the complete information case. Because $s_i(v)\le v$ by assumption \eqref{e-A2}, the welfare loss only emerges in the case of $\sum_i s_i(v_i)<c\le \sum_i v_i$ where the good is not provided when it is efficient to do so. Our inefficiency measure is denoted by $L(s)$ and is given by
\beq\label{e:WLoss}
\begin{split}
&L(s)=\sup_{(v_1,...,v_n)\in[0,\bar v]^n}
\sum\nolimits_{i=1}^n v_i-c \\ 
&\text{subject to $\sum\nolimits_{i=1}^n s_i(v_i)<c\le  \sum\nolimits_{i=1}^n v_i$.}
\end{split}
\eeq

We now characterize the PCE and the associated welfare losses for each of the three refund rules.
 
 \begin{proposition}\label{C:PG}
For each of the three refund rules the is a unique PCE strategy profile $s^*=(s^*_1,...,s^*_n)$ that satisfies assumption \eqref{e-A2}. For each $i=1,...,n$ and each $v_i\in[0,\bar v]$,

(i) if $r_i(x)$ is the no-refunds rule, then
\[
s^*_i(v_i)=\frac{v_i}{2} \quad \text{and} \quad L(s^*)=c;
\]

(iii) if $r_i(x)$ is the equal-split rule, then
\[
s^*_i(v_i)=\frac{n}{2n-1}v_i \quad \text{and} \quad L(s^*)=\frac{n-1}{n}c;
\]

(iii)  if $r_i(x)$ is the proportional rule, then
\[
s^*_i(v_i)=\frac{v_i}{2}-c+\frac{1}{2}\sqrt{v_i^2+4 c^2} \quad \text{and} \quad L(s^*)=\frac{n}{n+1}c.
\]

\end{proposition}
The proof is in Appendix \ref{s:C:PG}.

Note that 
\[
c>\frac{n}{n+1}c>\frac{n-1}{n}c.
\] 
So, the equal split rule is more efficient than the other two according to our efficiency measure. 
This raises a question of the optimal design. Among all feasible refund rules, which one is the most efficient? 

Also note that, unlike the equal-split rule, the proportional rule leads to equilibrium behavior that is independent of the number of agents. So, it is robust to the agents' knowledge of how many of them there are. This raises another question. Suppose that the agents are ambiguous not only about the others' values, but also about how many agents there are. How does this change their best compromises, and which refund rule is the most efficient in this case? 

We leave these questions for future research.

\subsection{Job Market Signaling}
Here we investigate Spence's job market signaling \citep{Spence73} when the worker's productivity and cost of education are unknown to the firms.

There is a single worker and two firms. The worker has productivity $\theta$ with $\theta\in[0,1]$. The worker publicly chooses a level of education $e$, either low ($e_L$) or high ($e_H$), to signal her productivity to the firms. The cost of low education is zero. The cost of high education is $c$ with $c\ge 0$. The firms observe the worker's education level $e$ and simultaneously offer wages $w_1$ and $w_2$. The worker chooses the better of the two wages. Her payoff is given by
\[
v(w_1,w_{2},e;\theta,c)=\max\{w_1,w_2\}-\begin{cases}
0, & \text{if $e=e_L$},\\
c, & \text{if $e=e_H$}.
\end{cases}
\]
Each firm $i$'s payoff is given by
\[
u_i(w_i,w_{-i};\theta)=\begin{cases}
\theta-w_i, & \text{if $w_i>w_{-i}$},\\
(\theta-w_i)/2, & \text{if $w_i=w_{-i}$},\\
0, & \text{if $w_i<w_{-i}$}.
\end{cases}
\]

The worker knows her productivity type $\theta$ and her cost of high education $c$. The firms know neither. They only know that the worker can have any productivity $\theta$ in $[0,1]$ and that her cost of high education $c$ lies between two linearly decreasing functions of $\theta$. Specifically, $c$ is between $1-b\theta$ and $1-b\theta+\delta$, where $b$ and $\delta$ are parameters that satisfy $0\le \delta\le b\le 1$.
Formally, the firms know that $(\theta,c)$ belongs to the set $\Omega$ given by
\beq\label{e:spence-types}
\Omega=\left\{(\theta,c):
%\begin{array}{l}
\theta\in[0,1] \ \text{and} \
c\in[1-b\theta, 1-b\theta+\delta].
%\end{array}
\right\}
\eeq

The worker's strategy $e^*(\theta,c)$ describes her choice of the education level for each pair  $(\theta,c)\in\Omega$. Each firm $i$'s strategy $w^*_i(e)$ describes its wage offer conditional on each education level $e\in \{e_L,e_H\}$. 

Consider how a firm makes inference from the observed level of education of the worker. This is formalized with the notion of {\it speculated states}. Formally, these are the firms' degenerate beliefs that put probability one on specific states. Speculated states are the pairs $(\theta,c)$ that a firm thinks are possible after observing the education level of the worker. The set of speculated states is denoted by $S_i(e)$.
This set is {\it consistent} with the worker's equilibrium strategy $e^*$ if it includes all pairs $(\theta,c)$ under which the worker chooses $e\in\{e_L,e_H\}$, so $(\theta,c)\in S_i(e)$ if $e^*(\theta,c)=e$.

For each education level $e$, firm $i$'s {\it maximum loss} of choosing wage $w_i$ when the other firm chooses the wage according to its strategy $w^*_{-i}$ is given by
\[
l_i(w_i,w^*_{-i};e)=\sup_{(\theta,c)\in S_i(e)} \left(\sup_{w'_i\ge 0} u_i(w'_i,w^*_{-i}(e); \theta)-u_i(w_i,w^*_{-i}(e); \theta)\right).
\]
The maximum loss describes how much more profit firm $i$ could have obtained if it had known the true productivity and cost of education of the worker, anticipating that the other firm follows its strategy $w^*_{-i}$. Firm $i$'s {\it best compromise} given $e$ is a wage $w^*_i(e)$ that achieves the lowest maximum loss for a given strategy $w^*_{-i}$ of the other firm:
\beq\label{e-spence-s1}
w^*_i(e)\in\argmin_{w_i\ge 0} l_i(w_i,w^*_{-i};e).
\eeq
Observe that the worker has complete information. There is no need for a compromise. So, the worker simply chooses a best-response:
\beq\label{e-spence-s2}
e^*(\theta,c)\in\argmax_{e\in\{e_L,e_H\}} v(w^*_1(e),w^*_{2}(e),e;\theta,c).
\eeq

A profile $(e^*,w^*_1,w^*_2,S_1,S_2)$ of strategies and speculated states is a {\it perfect compromise equilibrium} (PCE) if two conditions hold. First, the strategies satisfy \eqref{e-spence-s1} and \eqref{e-spence-s2}, so each firm $i$ chooses a best compromise, and the worker chooses a best response to the strategies of the others. Second, the firms' sets of speculated states are consistent with the worker's strategy $e^*$.

A PCE is {\it pooling} if the worker chooses the same level of education for all $(\theta,c)\in\Omega$. A PCE is {\it separating} if the set $\Omega$ can be partitioned into two subsets such that worker types belonging to the same subset choose the same level of education, but these levels differ between the two subsets.

\begin{proposition}\label{p:spence}
(i) There exists a pooling PCE in which the worker chooses low education, so
\[
e^*(\theta,c)=e_L \ \ \text{for all $(\theta,c)\in\Omega$},
\]
and the firms' wages are given by
\[
w^*_i(e_H)=w^*_i(e_L)=\frac{1}2, \ \ i=1,2.
\]
After each observed education level $e$, each firm $i$'s set of speculated states $S_i(e)$ contains all states.

\smallskip\noindent (ii) If $\delta\ge 2 b^2-b$, then a separating PCE does not exist.

\smallskip\noindent (iii) If $\delta<2 b^2-b$, then there exists a separating PCE in which the worker chooses high education  if and only if her cost $c$ is at most $ \frac1{2b}(b-\delta)$, so for all $(\theta,c)\in\Omega$
\[
e^*(\theta,c)=\begin{cases}
e_H, &\text{if $c\le \frac1{2b}(b-\delta)$},\\
e_L, &\text{if $c>\frac1{2b}(b-\delta)$},
\end{cases}
\]
and the firms' wages are given by
\beq\label{e-spence-w0}
w^*_i(e_H)=\frac{1}2+\frac{b+\delta}{4b^2} \ \ \text{and} \ \  w^*_i(e_L)=\frac{\delta}{2b}+\frac{b+\delta}{4b^2}, \ \ i=1,2.
\eeq
After each observed education level $e$, each firm $i$'s set of speculated states $S_i(e)$ contains each state $(\theta,c)\in\Omega$ that satisfies
\begin{align}\label{e-spence-pb}
&\theta\in\left[0,\frac{b+\delta}{2b^2}+\frac\delta b\right] \ \ \text{if $e=e_L$}, \quad\text{and} \quad \theta\in\left[\frac{b+\delta}{2b^2},1\right] \ \ \text{if $e=e_H$.}
\end{align}
\end{proposition}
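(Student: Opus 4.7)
The plan is to reduce the three parts to a single lemma about the wage subgame and then check or rule out candidate equilibria by direct construction.

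I would first prove the following lemma. Fix an information set at which a firm chooses a wage; suppose the rival uses the pure wage $w^*$ and the conceivable set projects onto an interval $[\theta_L,\theta_H]\subseteq[0,1]$. A direct case analysis of $u_i$ shows the maximum-loss function equals $\theta_H-w^*$ for $w_i<w^*$, equals $\max\{(\theta_H-w^*)/2,(w^*-\theta_L)/2\}$ at $w_i=w^*$, and equals $w_i-\theta_L$ for $w_i>w^*$ (with infimum $w^*-\theta_L$ not attained). The symmetric wage $w^*$ is therefore a best compromise iff $w^*\in[(\theta_H+2\theta_L)/3,(2\theta_H+\theta_L)/3]$; in particular the midpoint $(\theta_L+\theta_H)/2$ always lies in this range. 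Combined with the worker's obvious best response---pick $e_H$ iff $c(\theta)\le w^*(e_H)-w^*(e_L)$---this lemma does the heavy lifting.

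For part (i), take $e^*\equiv e_L$ and $B_i(e_L)=B_i(e_H)=\Omega$; the projection of $\Omega$ onto $\theta$ is $[0,1]$, so by the lemma $w^*_i(e)=1/2$ is a best compromise at each education level. The worker's payoff is $1/2$ from $e_L$ and $1/2-c(\theta)\le 1/2$ from $e_H$ (using $c(\theta)\ge 1-b\theta\ge 0$), so $e_L$ is a best response, and the conceivable sets are consistent ($e_L$ is on-path reached from every state, and $\Omega$ is an admissible off-path choice at $e_H$). For part (iii), guess the worker's threshold $c^*=(b-\delta)/(2b)$ and the strategy $e^*(\theta,c)=e_H$ iff $c(\theta)\le c^*$. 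Consistency yields the conceivable productivity ranges $[(b+\delta)/(2b^2),1]$ at $e_H$ and $[0,(b+\delta)/(2b^2)+\delta/b]$ at $e_L$, matching \eqref{e-spence-pb}. The midpoints of these two intervals reproduce \eqref{e-spence-w0} and their difference equals $c^*$ by direct algebra, closing the fixed point; verifying that each midpoint lies in the lemma's compromise window reduces on both sides to the same inequality $\delta<2b^2-b$.

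For part (ii), assume a separating PCE exists, let $c^*=w^*_i(e_H)-w^*_i(e_L)>0$, and derive the conceivable productivity ranges as a function of $c^*$ via consistency. Applying the lemma at each education level sandwiches $w^*_i(e_H)$ and $w^*_i(e_L)$ in explicit intervals whose endpoints are linear in $c^*$, and the identity $c^*=w^*_i(e_H)-w^*_i(e_L)$ forces $c^*$ into an interval formed by the differences of those endpoints. The main obstacle, and the step where the sharp constant emerges, is to show that---after splitting cases on whether the upper endpoint of the $e_L$-range must be truncated at $\theta=1$---the resulting inequalities on $c^*$ become jointly infeasible precisely when $\delta\ge 2b^2-b$.
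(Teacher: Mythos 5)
Your wage-subgame lemma is correct and in fact sharper than the computation in the paper: the paper bounds the loss at a wage tie by $\max\{0,\ol\theta_j-w_{-i},w_i-\ul\theta_j\}$ and reads off the single ``balancing'' wage $w_i=\ol\theta_j+\ul\theta_j-w_{-i}$, whereas you correctly get the tie loss $\max\{(\theta_H-w^*)/2,(w^*-\theta_L)/2\}$ and hence a whole interval $[(\theta_H+2\theta_L)/3,(2\theta_H+\theta_L)/3]$ of symmetric best compromises. With that lemma, your parts (i) and (iii) go through exactly as in the paper (consistent conceivable sets, midpoint wages, fixed point in the threshold cost), except for one misattribution: the condition $\delta<2b^2-b$ in part (iii) does \emph{not} come from the midpoints lying in the compromise window --- by your own lemma the midpoint always lies there. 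It comes from requiring the cost cutoff $c^*=w^H-w^L=(b-\delta)/(2b)$ to exceed the minimum feasible cost $\inf\{c(\theta)\}=\ul c(1)=1-b$, so that the high-education branch is actually populated and the equilibrium is genuinely separating; $(b-\delta)/(2b)>1-b$ rearranges to $\delta<2b^2-b$. You need to identify this source correctly, since it is the same inequality that drives part (ii).

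Part (ii) is where the proposal has a real gap, and the gap is structural rather than just an unfinished computation. The paper's argument pins the wages down to the unique midpoints $(\ol\theta_j+\ul\theta_j)/2$, solves the six-equation fixed point, and then checks the single inequality $w^H-w^L>1-b$. Your plan instead sandwiches $c^*$ using the \emph{full} best-compromise intervals from your lemma. But that sandwich is too wide to deliver the threshold $2b^2-b$: for instance, with $b=0.6$, $\delta=0.3$ (so $\delta>2b^2-b=0.12$), take $c^*=1/2$; consistency gives conceivable ranges $[5/6,1]$ at $e_H$ and $[0,1]$ at $e_L$, your lemma permits $w^H=0.9\in[8/9,17/18]$ and $w^L=0.4\in[1/3,2/3]$, and $w^H-w^L=1/2=c^*$ closes the fixed point with both education pools nonempty. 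So the route via the interval characterization produces separating profiles precisely where the proposition asserts none exist; the infeasibility you hope to derive ``precisely when $\delta\ge 2b^2-b$'' does not materialize. To reproduce the paper's conclusion you would have to argue that the equilibrium wages must be the midpoints (i.e.\ that the best compromise is unique), which contradicts your own (correct) lemma. You should either carry out the sandwich honestly and confront the discrepancy with the stated threshold, or adopt the paper's uniqueness claim and justify it --- as written, part (ii) cannot be completed along the lines you describe.
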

The proof is in Appendix \ref{s:p3}.

Let us discuss the strategic concerns underlying these PCE. Each firm $i$, when facing unknown productivity of the worker and deciding about the wage offer $w_i$, worries about two possible situations. It could be that the productivity is high, so offering a wage that is marginally greater than that of the competitor would improve profit. The greatest such loss occurs when the productivity is the highest possible. Alternatively, it could be that the productivity is low, so offering a wage that is smaller than the competitor's would eliminate the loss. The greatest such loss occurs when the productivity is the lowest possible. The firm thus offers the best compromise wage that balances these two losses, assuming that the other firm follows its equilibrium strategy. In equilibrium, both firms offer the same wage, so each of them has probability $1/2$ to hire the worker. This is the best compromise between not hiring a productive worker and hiring an unproductive worker at the specified wage.

An essential detail in the above considerations is that the greatest and smallest productivities are now endogenous and can depend on the level of education $e$ that the worker chooses. In the pooling equilibrium, $e=e_L$ does not provide any useful information, so all productivity types are possible. However, in the separating equilibrium, the firms believe that the productivity belongs to a different interval when observing a different level of education. For example, if $b=1$ and $\delta=1/4$, then the firms believe that $\theta\in[0,7/8]$ if the education is low, and that $\theta\in[5/8,1]$ if the education is high. 

Observe that, among the workers with productivity $\theta\in[5/8,7/8]$, some choose low education, while others choose high education. This overlap is due to the richness of the state space. The same productivity type $\theta$ can have different costs of education $c$ that can fall below or above the threshold at which high education is profitable. Clearly, this result cannot emerge in the traditional setting where the workers are differentiated only by their productivity.

The parameter $\delta$ captures the firms' uncertainty about the worker's cost of high education given her productivity type. As $\delta$ goes up, this range of costs increases. When $\delta$ is sufficiently large, education signaling is not very informative. A costly signal cannot be used to differentiate high and low productivity types, and the separating PCE does not exist.

\subsection{Bilateral Trade with Common Value}\label{s:trade}
We now examine bilateral trade with common value. In this example we show that trade can occur when traders follow a PCE. This is in stark contrast to the no-trade theorem under common values as predicted by PBE \citep{MilgromStokey82}.

A seller wants to sell an indivisible good to a buyer. The value $v$ of the good is the same for each of them. If the good is traded at some price $p$, then the buyer obtains $v-p$ and the seller obtains $p-v$. If the good is not traded, then both traders obtain zero.\footnote{The same analysis applies if the seller obtains $p$ when the good is sold and $v$ when the good is not sold.}

Neither trader knows $v$. Before the trade takes place, the traders privately consults independent experts to obtain some information about $v$. 
Each expert provides an interval of possible values, from the most pessimistic to the most optimistic assessment of the true value. Specifically, the seller privately learns that $v\in[x_0,x_1]$ and the buyer privately learns that $v\in[y_0,y_1]$. 

The traders commonly know the lower and upper bounds of the value $v$. These bounds are normalized to be 0 and 1, so $v\in[0,1]$. In addition, the traders commonly know that the experts cannot be wrong, so 
\begin{equation}\label{BT-cons}
v\in[x_0,x_1]\cap[y_0,y_1].
\end{equation}
We do not impose constraints on how precise or imprecise the experts' information is. We allow $[x_0,x_1]$ and $[y_0,y_1]$ to be arbitrary intervals contained in $[0,1]$ that satisfy \eqref{BT-cons}.

We consider a take-it-or-leave-it protocol in which the seller is the proposer. The protocol is as follows. First, the traders observe their private information $[x_0,x_1]$ and $[y_0,y_1]$. Then the seller asks a price $p\in [0,1]$. Finally, the buyer decides whether to accept or to reject the seller's asked price.

Let us describe the traders' strategies. Let $p^*(x_0,x_1)$ be the seller's asked price given her information $[x_0,x_1]$. Let $\alpha^*(p,y_0,y_1)$ be the buyer's decision whether to accept or to reject the asked price $p$ given the buyer's private information $[y_0,y_1]$, where  $\alpha^*(p,y_0,y_1)=1$ means to buy, and  $\alpha^*(p,y_0,y_1)=0$ means not to buy.

Next we describe how the buyer makes inference from the price asked by the seller. This is formalized with the concept of speculated values. These are values for $v$ that the buyer thinks are possible after he observes the price asked by the seller.
Let $V_b(p,y_0,y_1)$ be the buyer's set of speculated values when the seller asks price $p$. Clearly, the buyer rules out the values outside of $[y_0,y_1]$, so $V_b(p,y_0,y_1)\subset[y_0,y_1]$. But some values in $[y_0,y_1]$ may be ruled out too, because $p=p^*(x_0,x_1)$ depends on $x_0$ and $x_1$, and the buyer knows that $v\in[x_0,x_1]\cap[y_0,y_1]$. 

The buyer's maximum loss from his choice $\alpha\in\{0,1\}$, given the asked price $p$ and his set of speculated values $V_b(p,y_0,y_1)$, is
 \[
 l_b(\alpha;p,y_0,y_1)=\sup_{v\in V_b(p,y_0,y_1)}\big(\max\left\{v-p,0\right\}-\left(v-p\right)\alpha\big).
 \]
It describes how much more the buyer could have obtained if he knew the true value $v$. The seller's maximum loss of asking price $p$, given the buyer's acceptance strategy $\alpha^*$, is
 \[
 l_s(p;x_0,x_1)=\sup_{\substack{(v,y_0,y_1)\in[0,1]^3:\\ v\in[x_0,x_1]\cap[y_0,y_1]}}\left(\sup_{p'\in [0,1]} \left(p'-v\right)\alpha^*(p',y_0,y_1)-\left(p-v\right)\alpha^*(p,y_0,y_1) \right).
 \]
It describes how much more the seller could have obtained if she knew both $v$ and the buyer's private information $[y_0,y_1]$, anticipating that the buyer would follow his strategy $\alpha^*$. Each trader's {\it best compromise} is a choice that achieves the lowest maximum loss for a given strategy of the other trader. 
A strategy profile $(p^*,\alpha^*)$ is a {\it perfect compromise equilibrium} (PCE) if each trader chooses a best compromise given the strategy of the other trader.

\begin{proposition}\label{p:trade-b}
A perfect compromise equilibrium is given as follows. The seller asks
\beq\label{e-p4-1}
p^*(x_0,x_1)=\max\left\{\frac{x_0+x_1}{2}+ \frac{1-x_1}{4},\frac 1 2\right\}.
\eeq
If the seller asks $p\ge \tfrac 1 2$, then the buyer speculates that $v\in [\max\{y_0,2p-1\},y_1]$ and accepts this price if and only if
\[
p\le \frac{ y_0+y_1}{2}.
\] 
If the seller asks $p<\tfrac 1 2$, then the buyer speculates that $v\in \{y_0\}$ and accepts this price if and only if $p\le y_0$.
\end{proposition}

The formal proof is in Appendix \ref{s:p4}. 

Let us discuss the strategic concerns underlying this PCE.  
Consider first how the buyer makes his choice when the seller asks $p$. To build the intuition, let us first assume that the buyer makes no inference from the value of the asked price. So the buyer speculates that $v\in[y_0,y_1]$ and compares her maximal losses when buying and not buying the good. The maximal loss of buying is attained when $v=y_0$, giving the loss of $p-y_0$. The maximal loss of not buying is attained when $v=y_1$, giving the loss of $y_1-p$. The best compromise between these two situations is for the buyer to buy if and only if $p\le (y_0+y_1)/2$. 

Now consider the inference about $v$ that the buyer makes from the asked price $p$. When $p\ge 1/2$, the buyer concludes that $v$ cannot be below $2p-1$. This weakly increases the lower bound on $v$ to $\max\{y_0,2p-1\}$. When $2p-1\le y_0$, the inference from observing $p$ is not useful. So the buyer behaves as described above. When $2p-1>y_0$, the maximal loss from buying is larger than that from not buying. So the buyer does not buy. Notice that $2p-1>y_0$ implies $p>(y_0+y_1)/2$, and hence the rule described above continues to apply. In summary, the buyer behaves as if she ignores how the seller chooses the price when $p\ge 1/2$.

Alternatively, suppose that  $p<1/2$. This cannot happen in equilibrium, so the buyer can have any beliefs. Assume that the buyer speculates that $x_0=x_1=y_0$. So, the buyer speculates that $v=y_0$. Clearly, it is then best to buy the good if and only if $p\le y_0$.

Consider now how the seller chooses the price when anticipating the buyer's equilibrium behavior. Observe that $p$ should be at least $1/2$. This is because if $p<1/2$, then the buyer accepts $p$ if and only if $p\le y_0$. So the good will be purchased at $p<1/2$ only if it its value is above its price. Thus, choosing $p<1/2$ is dominated by not selling the good at all, which is achieved by choosing $p=1$.

To understand how a price $p\ge 1/2$ should be chosen, consider briefly an alternative setting where it is common knowledge that $v\in [x_0,x_1]=[y_0,y_1]$. So the buyer has the same information as the seller. Then the seller will ask $p=(x_0+x_1)/2$, as this is the highest price that the buyer is willing to accept, and any higher price leads to no sale with the same maximal loss.

Now return to our model. Assume that the buyer does not buy at price $p$. The seller's maximal loss is attained when the buyer would have bought at a marginally lower price, and moreover when the value of the good is the lowest, $v=x_0$. So the maximal loss equals $p-x_0$. Now assume that the buyer buys at price $p$. The seller's maximal loss is attained when the buyer is extremely optimistic and believes that $v\in [y_0,y_1]=[x_1,1]$. This buyer will accept any price up to $(x_1+1)/2$. So the maximal loss equals $(x_1+1)/2-p$. The seller chooses a best compromise price that balances these two losses, and hence sets $p=\frac 1 2(x_0+x_1)+ \frac 1 4 (1-x_1)$. Note that the price asked by the seller lies above the midpoint of the seller's interval $[x_0,x_1]$, due to the possibility of the extremely optimistic buyer.

Proposition \ref{p:trade-b} stands in contrast to the no-trade theorem under common values as predicted by PBE \citep{MilgromStokey82}. 
We observe that trade occurs in our PCE whenever the median assessment $\frac 1 2 (y_0+y_1)$ of the buyer exceeds  the price $p^*(x_0,x_1)$. The equilibrium price can be seen as an exaggeration of the seller's median assessment, because $p^*(x_0,x_1)>\frac 1 2 (x_0+x_1)$ unless $x_1=1$. 
The trade is possible because the traders cannot rule out the possibility of two opposing situations: winning and losing from trade. They do not want to miss a winning opportunity, but also they do not want to lose from trade. They compromise by choosing their decision thresholds so that they do not lose too much either way.

We hasten to point out that the PCE presented in Proposition \ref{p:trade-b} is not unique. For example, there is a no-trade PCE, where the seller always asks $p=1$, and the buyer accepts to buy the good at a price $p$ if and only if $p<y_0$. This equilibrium relies on a specific out-of-equilibrium belief of the buyer that $v=x_0=x_1=y_0$ whenever $p$ is different from 1. So, if the seller deviates to some price $p<1$, either the buyer rejects it, or the seller makes a loss.

\subsection{Forecasting}\label{s:forec}
We conclude the list of our examples with a forecasting problem. 
Here we consider a single agent who has to forecast of a random variable under multiple priors. This forecast is based on a noisy signal with a known distribution. In this example we illustrate how noise influences learning when the agent makes best compromise choices. 

In Appendix B we consider an alternative setting, where the agent knows the distribution of the random variable but she is ambiguous about the noisy signal. We also deal with the case where the agent is ambiguous about both aspects (Remark \ref{R:F}).

Consider an agent who has to forecast a random variable $\theta$ that belongs to $[0,1]$ and has a distribution $F$ with a density $f$.
The agent's payoff is the quadratic loss given by
\[
u(a,\theta)=-(a-\theta)^2,
\]
where $a\in[0,1]$ denotes a forecast. 

The agent can condition her forecast on a noisy signal $z$ about $\theta$. The signal generating process is given by a conditional probability distribution $G_\eps(z|\theta)$ with a parameter $\eps\in[0,1]$ specified as follows. Signal $z$ reveals the true value $\theta$ with probability $1-\eps$ and is drawn uniformly from $[0,1]$ with probability $\eps$, so
\beq\label{e-g-def}
G_\eps(z|\theta)=\begin{cases}
\eps z, &\text{if $z<\theta$,}\\
1-\eps+\eps z, &\text{if $z\ge \theta$.}
\end{cases}
\eeq 

The agent is ambiguous about the distribution $F$ of $\theta$. She knows that $F$ has mean $\theta_0$ and admits a density $f$ such that $\delta\le f(\theta)\le 1/\delta$ for some $\delta\in(0,1)$. This assumption on the density excludes holes in the support and point masses. The parameter $\delta$ can be interpreted as a lower bound on the degree of dispersion of $\theta$. The set of such distributions is
\[
\mathcal F_{\delta}=\left\{F\in\Delta([0,1]):\E_F[\theta]=\theta_0 \ \text{and} \ \delta\le f(\theta)\le 1/\delta \ \text{for all $\theta\in[0,1]$}\right\}.
\]
For each distribution $F\in \mathcal F_{\delta}$ the agent forms a posterior belief about $\theta$ conditional on the signal $z$, leading to a set of beliefs given $z$. 

Let $\E_{F,G_{\eps}}[\cdot|z]$ denote the conditional mean of $\theta$ when the agent speculates that $\theta$ is distributed according to $F$.
The {\it maximum loss} of a forecast $a\in[0,1]$ given a signal $z\in[0,1]$ is
\[
l(a;z)=\sup_{F\in\mathcal  F_{\delta}} \left(\sup_{a'\in[0,1]} \E_{F,G_{\eps}}[-(a'-\theta)^2|z]-\E_{F,G_{\eps}}[-(a-\theta)^2|z]\right).
\] 
It describes how much more the agent could have obtained if he knew the distribution $F$. A {\it best compromise} is a forecast $a^*(z)$ that achieves the smallest maximum loss,
\[
a^*(z)\in\argmin_{a\in[0,1]}l(a;z).
\]

\begin{proposition}\label{p:for1}
The agent's best compromise is
\[
a^*(z)=(1-\lambda)z+\lambda\theta_0,
\]
where
\[
\lambda=\frac{\eps}{2}\left(\frac{\delta}{1-\eps(1-\delta)}+\frac{1}{\delta+\eps(1-\delta)}\right).
\]
\end{proposition}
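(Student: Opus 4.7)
The plan is to exploit the quadratic structure of the loss to reduce the maximum-loss minimization to computing a single interval of posterior means. For any $F$ known to the forecaster, the quadratic loss is minimized at the posterior mean $m(F,z) := \E_{F,G_\eps}[\theta|z]$, and the standard bias--variance identity gives
\[
\sup_{a'\in[0,1]}\E_{F,G_\eps}[-(a'-\theta)^2|z] - \E_{F,G_\eps}[-(a-\theta)^2|z] = (a - m(F,z))^2.
\]
Hence $l(a;z) = \sup_{F\in\mathcal{F}_\delta}(a - m(F,z))^2$, so the best compromise $a^*(z)$ is the Chebyshev centre of the set $I(z) := \{m(F,z) : F \in \mathcal{F}_\delta\}$; once $I(z)$ is shown to be a closed interval, this is simply its midpoint.

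To compute $m(F,z)$, I would use the mixture structure of $G_\eps$: with probability $1-\eps$ the signal equals $\theta$ (the \emph{informative} event), and with probability $\eps$ it is an independent uniform draw (the \emph{noise} event). Conditional on the noise event, $\theta$ is independent of $z$ and has mean $\theta_0$; conditional on the informative event, $\theta = z$. Bayes' rule gives the posterior weight on the noise event as
\[
\lambda(z;F) = \frac{\eps}{(1-\eps)f(z) + \eps},
\]
so $m(F,z) = (1-\lambda(z;F))\,z + \lambda(z;F)\,\theta_0$. As $f(z)$ varies over $[\delta,1/\delta]$, $\lambda(z;F)$ is continuous and strictly decreasing in $f(z)$ with extreme values
\[
\lambda_{\min} = \frac{\eps\delta}{1-\eps(1-\delta)} \text{ at } f(z)=1/\delta, \qquad \lambda_{\max} = \frac{\eps}{\delta + \eps(1-\delta)} \text{ at } f(z)=\delta.
\]

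The main technical step is to verify that every value $t \in [\delta,1/\delta]$ for $f(z)$ is actually realised by some $F \in \mathcal{F}_\delta$ that also satisfies the mean constraint $\E_F[\theta]=\theta_0$. To do this I would pick a small open interval $J$ containing $z$, set $f \equiv t$ on $J$, and seek a density of the form $\alpha \mathbf{1}_A + \beta \mathbf{1}_B$ on $[0,1]\setminus J$, where $A$ and $B$ are the portions of $[0,1]\setminus J$ lying below and above a fixed cut-off strictly separating $\theta_0$. The two linear constraints (total mass one and mean $\theta_0$) uniquely determine $(\alpha,\beta)$; when $|J|$ is small, $(\alpha,\beta)$ is close to the uniform-density solution and therefore lies in $[\delta,1/\delta]$, using $\theta_0\in(0,1)$ and the slack afforded by $\delta<1$. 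Continuity of $\lambda(z;F)$ in $f(z)$ then yields that $I(z)$ is precisely the closed interval with endpoints $z+\lambda_{\min}(\theta_0-z)$ and $z+\lambda_{\max}(\theta_0-z)$.

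Taking the midpoint of $I(z)$ yields
\[
a^*(z) = z + \frac{\lambda_{\min}+\lambda_{\max}}{2}(\theta_0-z) = (1-\lambda)z + \lambda \theta_0,
\]
with $\lambda = \tfrac{\eps}{2}\!\left(\tfrac{\delta}{1-\eps(1-\delta)} + \tfrac{1}{\delta+\eps(1-\delta)}\right)$, matching the stated formula. The main obstacle I anticipate is the attainability step above: while the bias--variance reduction and the Bayes computation are routine, one must confirm that the mean constraint on $F$ does not truncate the achievable range of $f(z)$, which needs an explicit construction of witnessing densities as sketched.
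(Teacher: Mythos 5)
Your proof follows essentially the same route as the paper's: the bias--variance identity reduces the maximum loss to $\sup_{F}\bigl(a-\E_{F,G_{\varepsilon}}[\theta|z]\bigr)^2$ (this is the paper's Lemma on quadratic loss), the posterior mean is written as a convex combination of $z$ and $\theta_0$ with weight $\varepsilon/((1-\varepsilon)f(z)+\varepsilon)$ monotone in $f(z)$, and the best compromise is the midpoint of the two extremes obtained at $f(z)=\delta$ and $f(z)=1/\delta$. The one genuine addition is your explicit verification that every value of $f(z)$ in $[\delta,1/\delta]$ is actually attainable by a density in $\mathcal F_{\delta}$ satisfying the mean constraint $\E_F[\theta]=\theta_0$ --- a step the paper's proof silently skips --- and your two-level witness construction is a reasonable way to close that gap.
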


The proof is in Appendix \ref{s:P-F}.

Let us present some intuition behind Proposition \ref{p:for1}. Due to the quadratic penalty of making inaccurate forecasts, the loss of a forecast is equal to its distance from the expected mean conditional on the signal. The forecaster is worried about two possible situations, namely, when this conditional mean is high and when it is low. Consequently, the best compromise involves a forecast at the midpoint of these two extreme conditional means. Solving for this midpoint yields the formulae given in the statement of the proposition. In particular, the best compromise forecast lies between the ex-ante mean $\theta_0$ and the signal $z$.

Note that the agent's best compromise forecast depends on the precision $\eps$ of her signal and on the degree of the dispersion $\delta$ of the variable of interest. We show how each of these two parameters independently influences the best compromise forecast.

Fix the degree of dispersion $\delta$. When the agent's signal is not very noisy, then her forecast is close to the signal. This is because $a^*$ is continuous in $\eps$ and $\lim_{\eps\to 0} a^*(z)=z$. When the signal is very noisy, then her prediction is close to the ex-ante mean, as $\lim_{\eps\to1} a^*(z)=\theta_0$. 

Now we fix the precision $\eps$ of the noise and vary the bound $\delta$ on the degree of dispersion of $\theta$. As we relax the constraints on $F$ imposed by $\delta$, we obtain that the forecast approximates the midpoint between $\theta_0$ and $z$. Formally, $\lim_{\delta\to 0}a^*(z)=(\theta_0+z)/2$. This is because the best compromise balances two extreme situations. It could be that $F$ has very high dispersion, thus making the signal extremely valuable. On the other hand, it could that $F$ has very low dispersion, in which case the signal has very little value. The forecast seeks a best compromise between these two situations and selects the midpoint. 

Note that the above analysis and discussion reveals a discontinuity in the forecast $a^*$ at $\eps=\delta=0$.

\section{Conclusion}\label{s:concl}

We introduce a formal methodology to better understand how players deal
with uncertainty in dynamic strategic contexts. We are particularly
interested in modeling players who have an intuitive understanding of
uncertainty that can be expressed in terms of bounds. The general
setting looks at players who have ambiguous preferences that are modeled
as multiple priors. Learning occurs by updating prior by prior using
Bayes' rule whenever possible. Decisions are made under ambiguity by
finding best compromises.

Our objective is to present a solution concept that is as close as
possible to perfect Bayesian Equilibrium. The idea is to facilitate the
understanding and acceptance of PCE and simplify the interpretation of
new insights. This design objective also allows us to build on the
discipline underlying the concept of a PBE.

We identify at least six reasons that motivated us to create this new
solution concept, each of them motivated by contexts where PBE is not
adequate. These reasons are robustness, ambiguity, non-probabilistic
reasoning, parsimony, tractability, and accessibility. We explain each of
these in more detail.

{\it Robustness.} The PCE can be used to investigate the robustness of a PBE
to the priors of the players in a context where each of the players
seeks a strategy that also performs well for very similar priors.
Similarly it can be used to analyze how the play changes for a given PBE
when players only have an approximate understanding of some game elements.

{\it Ambiguity.} Ambiguous preferences have become popular. Our concept
allows us to include players with such preferences. The formalism we
introduce is not limited to the use of best compromises as the solution
concept. We could have also inserted any alternative concept for
decision making under ambiguity. The most prominent alternative is
maximin utility preferences that leads to a pessimistic mindset. We prefer the
flavor of finding compromises. Compromises seems necessary in a globalizing world where decision making is made in front of growing audiences and when there is less willingness
to base decisions on specific distributional assumptions.

{\it Non-probabilistic reasoning.} Uncertainty per se seems to mean that
details are hard to describe. And yet traditional models focus on two
types of workers, high and low, or capture the uncertainty by a small number of parameters.
Uncertainty seems to preclude that players agree on likelihoods of
events, and yet this is done in PBE. We introduce PCE to open the door
to understanding more realistic uncertainty.

{\it Parsimony.} The traditional PBE framework reveals a different solution
for each prior. Such flexibility can be useful to fit data. But
flexibility in terms of a multitude of different answers gives little
guidance to those who need to make choices. One easily loses the big
picture if there are many details that determine what happens. To
achieve clear and transparent results, one often gives up realism and
adapts simplistic uncertainty with only a few types for each player. In
contrast, the PCE concept under genuine ambiguity is by design very
parsimonious. Making best compromises across many different situations
allows to abstract from many details.

{\it Tractability.} The usefulness of our solution concept is demonstrated in
relevant economic examples where uncertainty is rich. This richness
limits a tractable analysis of PBE. PCE yields tractable results with
simple proofs as players focus on extreme situations, allowing them to
ignore intermediate constellations.

{\it Accessibility.} The PCE concept under genuine ambiguity is undemanding and easy to teach. Uncertainty can be described with bounds. There is no need for probabilities, and Bayes' rule can be put back on the shelf. 

The common acceptance of priors is dwindling. The literature on decision
making and game playing under uncertainty has now developed alternative
concepts. We hope to add to this literature. Numerous paths to future
research open up in a search for new insights and for a clearer
exposition of existing understanding of economic and strategic principles.

\section*{Appendix A. Proofs.}
\renewcommand{\thesection}{A}

\subsection{Proof of Theorem \ref{p:exist}}\label{s:pt}
Consider a game $\Gamma=(N,\mathcal G,\Omega,(\Pi_1,...,\Pi_n),(u_1,...,$ $u_n))$.  Let $\Phi$ be the set of information sets excluding the initial node $\phi_0$, so $\Phi=\bigcup\nolimits_{i\in N} \Phi_i$. Recall that $A_\phi$ is the set of pure actions of the player who moves at information set $\phi\in\Phi$. A strategy profile $s$ associates with each information set $\phi$ a mixed action $s_\phi\in\mathscr A_{\phi}=\Delta(A_\phi)$ at $\phi$. 

We now define an $\eps$-perturbed game. Let $\eps$ be a small enough positive number. Let $\Delta_\eps(A(\phi))$ be the set of mixed actions at information set $\phi$ such that each pure action in $A(\phi)$ is played with probability at least $\eps$. Let $\mathcal S_\eps$ be the set of strategy profiles such that $s_\phi\in\Delta_\eps(A(\phi))$ for each $\phi\in\Phi$. So the strategies in $\mathcal S_\eps$ are completely mixed. An {\it $\eps$-perturbed game $\Gamma_\eps$} is the original game $\Gamma$ where the players' strategies are confined to $\mathcal S_\eps$.

Consider a strategy profile $s\in\mathcal S_\eps$. Because $s$ is fully mixed, the belief system that is consistent with $s$ is uniquely defined by Bayes' rule. Denote this belief system by $\beta(s)$, and let $\beta_{\phi}(\pi;s)$ is the posterior probability distribution over the decision nodes in the information set $\phi$ derived from a prior $\pi$. Let
\[
B_{\phi_i}(s)=\left\{\beta_{\phi_i}(\pi_i;s):\pi_i\in\Pi_{i}\right\}
\]
be the set of beliefs at each $\phi_i\in\Phi_i$ for each player $i\in N$.
Let $U_{\phi_i}(s)$ be the negative of player $i$'s maximum loss at $\phi_i\in\Phi_i$ when player $i$ follows her strategy $s_{\phi_i}$, so
\begin{align}
U_{\phi_i}(s)&=-l(s_{\phi_i}|s,\beta(s),\phi)\notag\\
&=\inf_{b_i\in B_{\phi_i}(s)}\left(\bar u_i(s_{\phi_i}|s,\phi_i,b_i)- \sup_{a_i\in A(\phi_i)} \bar u_i(a_i|s,\phi_i,b_i)\right).\label{e-U-phi}
\end{align}
Two observations are in order. First, $U_{\phi_i}(s)=U_{\phi_i}(s_{\phi_i},s_{-\phi_i})$ is continuous in $s_{\phi_i}$. This is because $\ol u$ is continuous, and the set $B_{\phi_i}(s)$ of beliefs at $\phi_i$ is independent of $s_{\phi_i}$ (it only depends on the choices in the information sets preceding $\phi_i$). Second, $U_{\phi_i}(s_{\phi_i},s_{-\phi_i})$ is also continuous in $s_{-\phi_i}$ when $s\in S_\eps$, so the strategies are fully mixed. This is because $B_{\phi_i}(s)$ is a continuous correspondence w.r.t.~$s\in S_\eps$, as it is derived by Bayes' rule from the set of priors pointwise, and Bayes' rule is a well defined and continuous operator for $s\in S_\eps$. In addition, both $B_{\phi_i}(s)$ and $A(\phi_i)$ are compact. The continuity of $U_{\phi_i}(s_{\phi_i},s_{-\phi_i})$ in $s_{-\phi_i}$ then follows from the Maximum Theorem \citep{Berge}.

We now construct an augmented game $(\Phi,\mathcal G,\Omega,\pi_0,U)$ as follows. Let each information set $\phi\in\Phi$ be associated with a different player, so the set of players is the set of information sets $\Phi$. The game tree  $\mathcal G$ and the set of states $\Omega$ remain unchanged. Let $\pi_0$ be a common prior over the states, and assume that $\pi_0$ has full support over $\Omega$. Nature moves first by choosing a state $\omega\in\Omega$ according to the prior $\pi_0$. Each player $\phi\in\Phi$ moves only once, at her information set $\phi$, by choosing a mixed action from the set $\Delta_\eps(A(\phi))$. The interim payoff of each player $\phi\in\Phi$ at the information set $\phi$ is given by $U_\phi(s)$. 
Let $U=(U_\phi)_{\phi\in\Phi}$. 

The augmented game $(\Phi,\mathcal G,\Omega,\pi_0,U)$ can be seen as a game of incomplete information with a nonstandard specification of the players' payoffs. While in a standard game the payoffs are specified ex-post at each terminal node, in this augmented game the payoff $U_\phi$ of each player $\phi\in\Phi$ is specified in the interim, at the information set where the player makes a move. Because each player moves only once, the specification of the interim payoffs is sufficient to apply the concept of PBE or sequential equilibrium to the augmented game. 

Another nonstandard feature of the augmented game is that each player's interim payoff $U_\phi(s)$ depends on the set of beliefs $B_\phi(s)$ at $\phi$, but it is independent of the state $\omega$ itself. So, the prior $\pi_0$ does not affect the best-response actions by the players, it only affects the likelihood of reaching different information sets in the game tree.

Let $(s'_{\phi},s_{-\phi})\in\mathcal S_\eps$ denote the strategy profile where $s'_{\phi}$ is played by player $\phi$ and $s_{-\phi}$ is the profile of strategies at all other players. Observe that maximizing $U_{\phi}(s'_\phi,s_{-\phi})$ with respect to player $\phi$'s own decision $s'_\phi\in \Delta_\eps(A(\phi))$ is the same as minimizing the maximum loss at $\phi$ in the perturbed game $\Gamma_\eps$. Consequently, if $\ol s$ is a strategy profile in a sequential equilibrium of the augmented game, then $(\ol s,\beta(\ol s))$ is a PCE of $\Gamma_\eps$. The existence of PCE follows from the existence of sequential equilibrium for finite games. We refer the reader to \cite{Chakrabarti2016} for the backward-induction proof of existence of sequential equilibrium that uses interim payoffs at information sets to determine players' best-response correspondences.

Thus we have shown the existence of a PCE in every perturbed game $\Gamma_\eps$. It remains to show the existence of a PCE in the original, unperturbed game $\Gamma$. Consider a sequence $(\eps_k)_{k=1}^\infty$ such that $\lim_{k\to\infty} \eps_k=0$. Let $(s^k,\beta^k)$ be a PCE for the perturbed game $\Gamma_{\eps_k}$. By Bolzano-Weierstrass theorem there exists a subsequence $(k_t)_{t=1}^\infty$ such that $(s^{k_t},\beta^{k_t})$ converges to some $(s^*,\beta^*)$ as $t\to\infty$.
Observe that the belief system $\beta^*$ is consistent with $s^*$. This is because for each player $i$, each information set $\phi_i\in\Phi_i$, and each prior $\pi_i\in\Pi_i$, either $\beta^*_{\phi_i}(\pi_i)$ is derived by Bayes rule that is continuous as $(s^{k_t},\beta^{k_t})$ approaches $(s^*,\beta^*)$, or Bayes rule is undefined in the limit, in which case $\beta^*_{\phi_i}(\pi_i)$ is also consistent by definition. Next, for all $\eps>0$, all $t$ such that $\eps\ge \eps_{k_t}$, and all $s'_\phi\in \Delta_\eps(A_{\phi})$ we have
\begin{align*}
0\le &U_{\phi}(s^{k_t}_{\phi},s^{k_t}_{-\phi})-U_{\phi}(s'_\phi,s^{k_t}_{-\phi})=-l(s^{k_t}_{\phi}|s^{k_t},\beta^{k_t},\phi)+l(s'_\phi|s^{k_t},\beta^{k_t},\phi)\xrightarrow{t\rightarrow\infty} \\
&-l(s^*_{\phi}|s^*,\beta^*,\phi)+ l(s'_\phi|s^*,\beta^*,\phi)=U_{\phi}(s^*_{\phi},s^*_{-\phi})-U_{\phi}(s'_\phi,s^*_{-\phi}),
\end{align*}
where the inequality is by $s^{k_t}_{\phi}$ being a best response in the augmented game, the first equality is by \eqref{e-U-phi}, the limit is by the continuity of $l(s_{\phi}|s,\beta,\phi)$ in $s$ and $\beta$, and the second equality is because the set $B_{\phi}(s^{k_t})$ of beliefs at $\phi$ is independent of the mixed action $s^{k_t}_\phi$ at $\phi$. It follows that $s^*_{\phi}$ is a best response to $s^*_{-\phi}$. So $s^*$ is a best compromise strategy profile in the unperturbed game $\Gamma$. We thus have shown that $(s^*,\beta^*)$ is a PCE of $\Gamma$.
\qed

\subsection{Proof of Proposition \ref{p:lemon}}\label{s:lemon-proof}
Each of the seller's two information sets (one for each type, low and high) contains a single decision node. Hence the seller's beliefs at these decision nodes are trivial, and the seller's best compromises are simply her best responses. In the high information set ($\theta=\theta_H$), choosing $\sigma^*_S(\theta_H)=1$ is the strictly dominant strategy. We now consider two possibilities of the seller's choice in the low information set: $\sigma^*_S(\theta_L)>0$ and $\sigma^*_S(\theta_L)=0$.

The buyer has a single information set $\phi_B$ and forms a set of belief $B_{\phi_B}$. Each belief $b\in B_{\phi_B}$ is given by $b=\beta_{\phi_B}(\pi_k)$, $k=0,1,...,K$ and denotes the probability of being in the decision node where the state is high, $\theta=\theta_H$.

First, suppose that $\sigma^*_S(\theta_L)>0$. Then, for each prior $\pi_k\in\Pi_B$, a buyer's belief $b\in B_{\phi_B}$ is consistent with $\sigma^*_S$ if it is given by Bayes' rule \eqref{e:Bayes}. In particular, $\beta_{\phi_B}^*(0)=0$ and $\beta_{\phi_B}^*(1)=1$. From \eqref{e:loss-lemon} it is evident that the maximum loss for the buyer is determined at the extreme beliefs, 0 and 1. Substituting these into \eqref{e:loss-lemon} yields
\[
l_S(\sigma_B|\phi_B,\sigma_S^*,\beta^*_{\phi_B})=\max\big\{(p-c_B)(1-\sigma_B),c_B\sigma_B\big\},
\]
which is minimized by $\sigma_B^*=(p-c_B)/p$. However, using assumption \eqref{e-assum-lemon}, we obtain
\[
\frac{p-c_B}{p}>\frac{p}{p+c_S}.
\]
Consequently, by \eqref{e-BR-S-lemon}, the unique best response of the low type seller to the buyer's strategy $\sigma_B^*=(p-c_B)/p$ is $\sigma^*_S(\theta_L)=0$, which contradicts the initially assumed $\sigma^*_S(\theta_L)>0$. We thus conclude that there is no PCE where $\sigma^*_S(\theta_L)>0$.

Alternatively, suppose that $\sigma^*_S(\theta_L)=0$. Then, for each prior $\pi_k\in\Pi_B$ with $\pi_k>0$,  Bayes' rule \eqref{e:Bayes} yields the belief $b=1$. However, for the prior $\pi_0=0$, Bayes' rule does not apply, so every belief $b_0\in [0,1]$ is consistent with $\sigma^*_S$. We thus obtain $B_{\phi_B}=\{b_0,1\}$ for some $b_0\in[0,1]$. Substituting these beliefs into \eqref{e:loss-lemon} yields
\[
l_S(\sigma_B|\phi_B,\sigma_S^*,\beta^*_{\phi_B})=\max\big\{((1-b_0)p-c_B)(1-\sigma_B),c_B\sigma_B\big\},
\]
which is minimized by
\[
\sigma^*_B=\begin{cases}
0, & \text{if $b_0\in[\frac{p-c_B}{p},1]$,}\\
\frac{(1-b_0)p-c}{(1-b_0)p}, &  \text{if $b_0\in[0,\frac{p-c_B}{p})$}.
\end{cases}
\]
By \eqref{e-BR-S-lemon}, if $\sigma^*_B<p/(p+c_S)$ (in particular, if $\sigma^*_B=0$), then the unique best response of the low type seller is $\sigma^*_S(\theta_L)=1$, which contradicts the initially assumed $\sigma^*_S(\theta_L)=0$. However, for each $b_0$ that satisfies
\beq\label{e:cond:sigma}
\sigma^*_B=\frac{(1-b_0)p-c}{(1-b_0)p}\in\left[\frac{p}{p+c_S},1\right].
\eeq
the strategy $\sigma^*_S(\theta_L)=0$ is a best response for the seller. Thus, $(\sigma^*_S,\sigma^*_B)$ with $b_0$ that satisfies \eqref{e:cond:sigma} is a PCE pair of strategies. Finally, observe that the interval of $b_0$ that satisfies \eqref{e:cond:sigma} is given by $\left[0,1-c_B/c_S-c_B/p\right]$.
\qed

\subsection{Proof of Proposition \ref{p:cournot}}\label{s:p1}
To prove the existence of a unique PCE, we find a unique profile of best-compromise strategies and a unique profile of beliefs that satisfy Definition \ref{def:consistency}.

First, we find the beliefs. The firms have genuine ambiguity, so the set of priors $\Pi_i$ of firm $i$ is equal to the set of degenerate beliefs over $\mathcal P$. By Definition \ref{def:consistency} and the consistency requirement in PCE, the set $B_i(\phi_i)$ of beliefs of firm $i$ at its unique information set $\phi_i$ must be equal to the set of priors, so $B_i(\phi_i)=\Pi_i$.

Next, we find each firm's equilibrium quantity. For derivations, we assume that the quantities and the price are always nonnegative, and then we verify  that this is indeed the case in equilibrium.

Let $x^*_i(q_{-i}, P)$ be a best response strategy of player $i$ given the knowledge of $q_{-i}$ and the inverse demand function $ P$. The loss of firm $i$ from choosing quantity $q_i$, given $q_{-i}$ and $P$, is denoted by $\Delta u_i(q_i,q_{-i}; P)$ and given by
\[
\Delta u_i(q_i,q_{-i}; P)= P(x_i^*(q_{-i}, P)+q_{-i})x_i^*(q_{-i}, P)- P(q_i+q_{-i})q_i.
\]
By \eqref{e:demand}, the marginal revenue of firm $i$ satisfies
\[
\ul  P(q_i+q_{-i})+\ul  P'(q_i+q_{-i})q_i\le  P(q_i+q_{-i})+ P'(q_i+q_{-i})q_i\le  \ol P(q_i+q_{-i})+\ol P'(q_i+q_{-i})q_i.
\]
Therefore, for given $q_{j}$ and $ P$, the best-response quantity $x^*_i(q_{-i}, P)$ of firm $i$ always lies between $x^*_i(q_{-i},\ul  P)$ and  $x^*_i(q_{-i},\ol  P)$. While the profit function need not be concave in general, it is concave when $ P=\ul  P$ or when $ P=\ol  P$. So the highest loss will always be attained in one of these two extreme cases:
\[
l_i(q_i,q_{-i})=\sup_{ P} \Delta u_i(q_i,q_{-i}; P) =\max\{ \Delta u_i(q_i,q_{-i};\ul P), \Delta u_i(q_i,q_{-i};\ol P)\}.
\]
It is easy to see that the maximum loss is minimized by balancing the two expressions under the maximum:
\[
\Delta u_i(q_i,q_{-i};\ol  P)=\Delta u_i(q_i, q_{-i};\ul  P).
\]
Substituting $\ul  P$ and $\ol  P$ and simplifying the expressions yields the equation
\beq\label{e-maxmin-cournot}
\frac{(\ol a-\ol b q_{-i})^2}{4\ol b}-(\ol a-\ol b(q_i+ q_{-i}))q_i=\frac{(\ul a-\ul b q_{-i})^2}{4\ul b}-(\ul a-\ul b(q_i+ q_{-i}))q_i.
\eeq
Solving for $q_i$ yields the unique best compromise quantity:
\[
q^*_i=\frac{\ul a\sqrt{\ol b}+\ol a\sqrt{\ul b}}{2(\ul b\sqrt{\ol b}+\ol b\sqrt{\ul b})}-\frac{q_{j}}2, \ \ i=1,2.
\]
Solving this pair of equations for $(q^*_1,q^*_2)$, we find \eqref{e-cournot-eq}.
It is easy to verify that  under our assumptions, $q^*_i>0$, and moreover, $ P(q^*_1+q^*_2)\ge\ul  P(q^*_1+q^*_2)>0$. Substituting the solution into \eqref {e-maxmin-cournot} yields the maximum loss of each firm \eqref{e-cournot-com}.
\qed

\subsection{Proof of Proposition \ref{p:bertrand}}\label{s:p2}
Similarly to the proof of Proposition \ref{s:p1}, to prove the existence of a unique PCE, we find a unique profile of best-compromise strategies and a unique profile of beliefs that satisfy Definition \ref{def:consistency}.

First, we find the beliefs. The firms have genuine ambiguity, so the set of priors $\Pi_i$ of firm $i$ is equal to the set of degenerate beliefs over $[\ul c,\ol c]^2$. By Definition \ref{def:consistency} and the consistency requirement in PCE, firm $i$ with cost $c_i$ must have the set $B_i(c_i)$ of beliefs equal to the set of priors, so $B_i(\phi_i)=\Pi_i$.

Next, we find each firm's equilibrium quantity. 
For derivations, we assume that each firm prices at or above marginal cost, and then we verify that this is indeed the case in equilibrium.

Consider firm $i$ with type $c_i\in[\ul c,\ol c]$. Let $s^m(c_i)$ be the profit-maximizing pricing strategy if firm $i$ were the monopoly, so $s^m(c_i)=(a+ c_i)/2$. Since we have assumed that $\ol c\le a/2$, this means that $s^m(c_i)\ge\ol c$ for all $c_i$. The monopoly profit is $(a-c_i)^2/(4b)$. 

Fix the other firm's strategy $s^*_{-i}(c_{-i})$ and let $\ol p$ be the maximum price of the other firm, so $\ol p=\sup\nolimits_{c_{-i}\in [\ul c,\ol c]} s^*_{-i}(c_{-i})$. Given the other firm's cost $c_{-i}$, and thus the price $p_{-i}=s^*_{-i}(c_{-i})$, firm $i$'s maximum profit is
\begin{align*}
u^*_i(p_{-i};c_i)=\sup_{x_i\ge 0}u_i(x_i, p_{-i};c_i)&= \begin{cases}
0, &\text{if $p_{-i}\le c_i$},\\
(p_{-i}-c_i)\frac{a-p_{-i}}{b}, &\text{if $c_i<p_{-i}\le s^m(c_i)$},\\
\frac{(a-c_i)^2}{4b}, &\text{if $p_{-i}> s^m(c_i)$}
\end{cases}\\
&=\max\left\{0,(p_{-i}-c_i)\frac{a-p_{-i}}{b},\frac{(a-c_i)^2}{4b}\right\}.
\end{align*}

Let $p_i$ be a price of firm $i$. We now find the maximum loss of firm $i$ from choosing $p_i$, given its marginal cost $c_i$ and the strategy $s^*_{-i}$ of the other firm. There are three cases. 

First, suppose that $p_{-i}\le c_i\le p_i$. Then firm $i$ cannot make positive profit, so $p_i$ is a best response. Thus, firm $i$ behaves optimally in this case, so the loss is zero.

Second, suppose that $c_i<p_{-i}\le p_i$. Then firm $i$ 
could have been better off by marginally undercutting $p_{-i}$. Maximizing the loss over $p_{-i}\in(c_i,p_i]$, we obtain
\beq\label{e-loss-1}
\sup_{p_{-i}\in(c_i,p_i)}\left(u^*_i(p_{-i};c_i)-u_i(p_i,p_{-i};c_i)\right)=
\begin{cases}
(p_{i}-c_i)\frac{a-p_{i}}{b}, &\text{if $p_{i}\le s^m(c_i)$,}\\
\frac{(a-c_i)^2}{4b}, &\text{if $p_{i}>s^m(c_i)$}.
\end{cases}
\eeq
Third, suppose that $p_i<p_{-i}$. Then firm $i$ could have made more profit by increasing its price, so its maximum loss is
\begin{multline}\label{e-loss-2}
\sup_{p_{-i}\in(p_i,\ol p]}\left(u^*_i(p_{-i};c_i)-u_i(p_i,p_{-i};c_i)\right)=u^*_i(\ol p_;c_i)-u_i(p_i,\ol p;c_i)\\
=-(p_{i}-c_i)\frac{a-p_{i}}{b}+\begin{cases}
(\ol p-c_i)\frac{a-\ol p}{b}, &\text{if $p_i\le s^m(c_i)$,}\\
\frac{(a-c_i)^2}{4b}, &\text{if $p_i>s^m(c_i)$}.
\end{cases}
\end{multline}
To minimize the maximum loss, we need to minimize the greater of the expressions in \eqref{e-loss-1} and \eqref{e-loss-2}. Observe that, by the definition of $s^m(c_i)$, the right-hand side in \eqref{e-loss-1} is constant and the right-hand side in \eqref{e-loss-2} is strictly increasing in $p_i$ for $p_i>s^m(c_i)$. So we only need to consider $p_i\le s^m(c_i)$. Under this assumption, the greater of 
 the expressions in \eqref{e-loss-1} and \eqref{e-loss-2} can be simplified to
\[
l_i(p_i,s^*_{-i};c_i)=\max\left\{(p_{i}-c_i)\frac{a-p_{i}}{b},(\ol p-c_i)\frac{a-\ol p}{b}-(p_{i}-c_i)\frac{a-p_{i}}{b}\right\}.
\]
Because one expression is increasing and the other is decreasing in $p_i$ for $p_i\le s^m(c_i)$, the maximum loss is minimized at the solution of
\beq\label{e-bertrand-loss}
(p_{i}-c_i)\frac{a-p_{i}}{b}=(\ol p-c_i)\frac{a-\ol p}{b}-(p_{i}-c_i)\frac{a-p_{i}}{b}.
\eeq
Solving the above for $p_i$ and assigning $s^*_i(c_i)=p_i$, we obtain \eqref{e-bertrand-eq}.

To see that $s_i^*(c_i)\ge c_i$, observe that
\[
s_i^*(c_i)-c_i=\frac{1}{2}\left(a-c_i-\sqrt{(a-\ol c)^2+(\ol c- c_i)^2}\right)\ge 0
\]
by the triangle inequality and $a>\ol c\ge c_i$. Moreover, $s_i^*(c_i)>c_i$ when $c_i<\ol c$, and $s^*_i(\ol c)=\ol c$. 
Finally, substituting $s^*_i(c_i)$ into the maximum loss expression in \eqref{e-bertrand-loss} yields \eqref{e-bertrand-com}.
\qed

\subsection{Proof of Proposition \ref{C:PG}}\label{s:C:PG}
We prove only part (iii) of Proposition \ref{C:PG} for the proportional rule given by \eqref{t-prop}. The proof of parts (i) and (ii) for the other two rules is analogous but easier, and thus omitted.

Let the refunds $r_i$ be given by the proportional rule \eqref{t-prop}. We first derive an agent $i$'s best compromise strategy $s^*_i$. Agent $i$ who chooses $x_i$ worries about two possible situations.
It could be that the total contribution is marginally below $c$, so  $x_i+\sum_{j\ne i} s_j(v_j)=c-\eps$ for a small $\eps>0$. The good is not provided, but had $i$ contributed $\eps$ more it would have been provided. As $\eps\to 0$, agent $i$'s loss is $v_i-x_i$. 
Alternatively, it could be that all other agents contribute enough to cover $c$, so $\sum_{j\ne i} s_j(v_j)\ge c$. Thus the agent could have contributed nothing and still received the good. In this case the loss is the amount of contribution net of the refund, $x_i-r_i(x)$. This loss is maximized when the other agents' contributions exactly equal to the cost, so $\sum_{j\ne i} s_j(v_j)=c$, so by \eqref{t-prop} we have
\[
x_i-r_i(x)=\frac{cx_i}{x_i+\sum_{j\ne i} s_j(v_j)}\le \frac{cx_i}{x_i+c}.
\]
The loss in the first case is weakly decreasing and the loss in the second case is strictly increasing in $x_i$. To find $x_i$ that minimizes the maximum loss, we solve the equation
\[
v_i-x_i=\frac{cx_i}{x_i+c}
\]
for $x_i$. Denote the solution by $s^*(v_i)$. It is easy to verify that it is as given in part (iii) of the statement of Proposition \ref{C:PG}. Note that it is symmetric across the players, so we drop the subscript $i$. 

The above argument requires that there exist values $v_j\in[0,\ol v]$ such that $\sum_{j\ne i} s^*(v_j)=c$. Observe that $s^*(0)=0$ and $s^*(v_i)$ is increasing in $v_i$. So, we only need to verify that $\sum_{j\ne i}s^*(\ol v)\ge c$, which holds under condition \eqref{E:A1}. 

It remains to determine the maximum welfare loss $L(s^*)$ as defined in \eqref{e:WLoss}. 
As $s^*(v_i)$ is increasing in $v_i$, the constraint $\sum\nolimits_{i=1}^n s^*(v_i)<c$ must be binding. Moreover, it is easy to verify that $s^*(v_i)$ is convex in $v_i$. Thus, by Jensen's inequality we have
\[
\sum\nolimits_{j=1}^n s^*(v_j)\ge n s^*\left(\frac{1}{n}\sum\nolimits_{j=1}^n v_j\right).
\]
Thus, the maximum is attained for $v_1=...=v_n=z$ for $z\in[0,\ol v]$ such that $n s^*(z)=c$. Solving the equation
\[
n\left(\frac{z}{2}-c+\frac{1}{2}\sqrt{z^2+4 c^2}\right)=c
\]
for $z$ yields
\[
z=\frac{2n+1}{n(n+1)}c.
\]
We thus obtain
\[
L(s^*)=nz-c=\frac{2n+1}{n+1}c-c=\frac{n}{n+1}c.\tag*{\qed}
\]

\subsection{Proof of Proposition \ref{p:spence}}\label{s:p3}
First we find the equilibrium wages $w^H$ and $w^L$ after the worker's level of education $e_H$ and $e_L$. For each $j=L,H$, each firm $i$ has the set of speculated states $S_i(e_j)\subset \Omega$. Let this set be the same for each firm. Denote this set by $S(e_j)$, so $S(e_j)=S_1(e_j)=S_2(e_j)$.

 Let $\ul\theta_j$ and $\ol\theta_j$ be the lowest and highest productivity levels given $e_j$, so
\beq\label{e-theta}
\ul\theta_j=\inf\{\theta:(\theta,c)\in S(e_j)\} \quad\text{and}\quad \ol\theta_{j}=\sup\{\theta:(\theta,c)\in S(e_j)\}, \ \ j=L,H.
\eeq
Consider a firm $i$, some wages $w_i$ and $w_{-i}$, and a state $(\theta,c)$. Firm $i$'s maximum profit $u^*_i(w_{-i};\theta)$ is obtained by marginally outbidding $w_{-i}$ when it is below $\theta$, and by choosing the wage below $w_{-i}$ and thus giving up the worker if $\theta\le w_{-i}$, so
\[
u^*_i(w_{-i};\theta)=\sup_{w_i\ge 0 } u_i(w_i,w_{-i};\theta)=\max\{\theta-w_{-i},0\}.
\]
Observe that we only need to consider $w_i$ and $w_{-i}$ in $[\ul\theta_j,\ol\theta_j]$. A wage above $\ol\theta_j$ is dominated and cannot be a best compromise; a wage below $\ul\theta_j$ will always be overbid by the rival's wage, as there is common knowledge that $\theta\ge \ul\theta_j$.

Suppose that $w_i<w_{-i}$, so $u_i(w_i,w_{-i};\theta)=0$.  Then the largest loss is obtained when $\theta$ is the greatest:
\[
\sup_{\theta:(\theta,c)\in S(e_j)} (u_i^*(w_{-i};\theta)-u_i(w_i,w_{-i};\theta))\le \max\{\ol \theta_{j}-w_{-i},0\}.
\]
Next, suppose that $w_i>w_{-i}$, so $u_i(w_i,w_{-i};\theta)=\theta-w_i$. Then the largest loss is obtained when $\theta$ is the smallest:
\[
\sup_{\theta:(\theta,c)\in S(e_j)} (u_i^*(w_{-i};\theta)-u_i(w_i,w_{-i};\theta))=\max\{\theta-w_{-i},0\}-(\theta-w_i)\le w_i-\ul \theta_{j}.
\]
Finally, suppose that $w_i=w_{-i}$, so $u_i(w_i,w_{-i};\theta)=(\theta-w_i)/2$. Then
\begin{multline*}
\sup_{\theta:(\theta,c)\in S(e_j)} (u_i^*(w_{-i};\theta)-u_i(w_i,w_{-i};\theta))=\max\{\theta-w_{-i},0\}-\frac{\theta-w_i}2\\ \le  \max\{0,\ol \theta_{j}-w_{-i}, (w_i-\ul \theta_{j})/2\}.
\end{multline*}
The maximum loss $l_i(w_i,w_{-i})$ is given by the greatest of the three expressions, so
\[
l_i(w_i,w_{-i})=\max\{0,\ol \theta_{j}-w_{-i},w_i-\ul \theta_{j}.\}.
\]
The wages $w_i$ that minimizes the maximum loss satisfies 
\[
w_i=\ol \theta_{j}+\ul \theta_{j}-w_{-i}, \ \ i=1,2.
\]
So, we have obtained two equations, one for each $i=1,2$. Solving this pair of equations for $w_1$ and $w_2$ yields the best compromise $w^*_i(e_j)$ for each firm $i$, where
\beq\label{e-spence-w}
w_i^*(e_j)=\frac{\ol \theta_{j}+\ul \theta_{j}}{2}, \ \ i=1,2.
\eeq
The associated maximum losses are
\beq\label{e-spence-c}
l_i(w^*_i(e_j),w^*_{-i}(e_j))=w^*_i(e_j)-\ul \theta_{j}.
\eeq

Next, observe that the worker operates under complete information. Given each choice of $e_j$, she anticipates the wages $w^j=w^*_1(e_j)=w^*_2(e_j)$, $j\in\{L,H\}$. So, given a state $(\theta,c)$, the worker chooses $e=e_H$ if and only if\footnote{The tie breaking is arbitrary, because the set of types is a continuum.}
\[
w^H-c(\theta)\ge w^L.
\]
Recall that $c(\theta)$ is strictly decreasing, and denote by $c^{-1}$ its inverse. Then, the worker chooses $e=e_H$ if and only if her type $\theta$ satisfies
\[
\theta\ge c^{-1}(w^H-w^L).
\]

{\it Pooling PCE.} If $w^H\le w^L$, then every type chooses low level of education $e_L$, so the equilibrium is pooling.
After observing $e=e_L$, the consistent set of speculated states $S(e_L)$ is thus the entire set of states, so $S(e_L)=\Omega$. By \eqref{e:spence-types}, the highest and lowest $\theta$ in $S(e_L)$ are $\ol\theta_{L}=1$ and $\ul\theta_{L}=0$. By \eqref{e-spence-w}, we obtain the equilibrium wages $w_i(e_L)=1/2$. After observing an out-of-equilibrium education $e=e_H$, the set of speculated states $S(e_H)$ must induce the wage $w^*_i(e_H)\le w^*_i(e_L)$. In particular, we can assume $S(e_H)=\Omega$, and thus $w^*_i(e_H)=1/2$.

Substituting the wage of $w_i^*(e)=1/2$ and the lower bound productivity $\ul \theta_{L}=0$ into  \eqref{e-spence-c}, we obtain the maximum loss for each firm $i$,
\[
l_i(w^*_i(e_j),w^*_{-i};e_j)=\frac 1 2, \ \ i=1,2, \ \ j=L,H.
\]

{\it Separating PCE.} Consider now $w^H>w^L$, so that the worker with cost $c\le w^H-w^L$ chooses high education. Let 
\[
S(e_L)=\{(\theta,c)\in\Omega: c>w^H-w^L\} \quad\text{and}\quad S(e_H)=\{(\theta,c)\in\Omega: c(\theta)\le w^H-w^L\}
\]
be the sets of beliefs of each firm when the level of education is $e_L$ and $e_H$, respectively. So, $S(e_L)$ and $S(e_H)$ contain all pairs $(\theta,c)$ such that low and high education is chosen, respectively. These sets thus satisfy the consistency requirement (Definition \ref{def:consistency}).

By \eqref{e:spence-types} and \eqref{e-theta}, the highest and lowest $\theta$ in $S(e_H)$ are given by
\beq\label{e-spence-1}
\ol\theta_{H}=1 \quad\text{and}\quad  \ul\theta_{H}=\frac{1-w^H+w^L}b.
\eeq
Similarly, the highest and lowest $\theta$ in $S(e_L)$ are given by
\beq\label{e-spence-2}
\ol\theta_{L}=\frac{1+\delta-w^H+w^L}b
 \quad\text{and}\quad  \ul\theta_{L}=0.
\eeq
From \eqref{e-spence-w}, we have
\beq\label{e-spence-3}
w^H=\frac{\ol \theta_{H}+\ul \theta_{H}}{2}\quad\text{and}\quad w^L=\frac{\ol \theta_{L}+\ul \theta_{L}}{2}.
\eeq
Solving the system of six equations in \eqref{e-spence-1}, \eqref{e-spence-2}, and \eqref{e-spence-3}, with six unknowns ($w^H$, $w^L$, $\ol\theta_{H}$, $\ul\theta_{H}$, $\ol\theta_{L}$, and $\ul\theta_{L}$), we obtain the equilibrium wages and the bounds on the productivity types as shown in \eqref{e-spence-w0} and \eqref{e-spence-pb}. 

Observe that the lowest possible cost of high education is $\inf \{c:(\theta,c)\in\Omega\}=1-b$. Therefore, there exist states $(\theta,c)$ where high education $e_H$ is chosen if and only if $w^H-w^L>1-b$. Substituting our solution for $w^H$ and $w^L$ given by \eqref{e-spence-w0}, we obtain that  $w^H-w^L>1-b$ if and only if
\[
\delta<2b^2-b.
\]
This condition is thus necessary and sufficient for the existence of separating PCE.

Finally, substituting the wage $w^H$ and the productivity lower bound  $\ul\theta_{H}$ into \eqref{e-spence-c}, we obtain
firm $i$'s maximum loss when $e=e_H$,
\[
l_i(w^*_i(e_H),w^*_{-i}(e_H);e_H)=w^H-\ul\theta_{H}=\frac 1 2-\frac{b+\delta}{4b^2}.
\]
Substituting the wage $w^L$ and the productivity lower bound  $\ul\theta_{L}$  into \eqref{e-spence-c}, we obtain
the maximum loss when $e=e_L$,
\[
l_i(w^*_i(e_L),w^*_{-i}(e_L);e_L)=w^L-\ul\theta_{L}=\frac{\delta}{2b}+\frac{b+\delta}{4b^2}.\tag*{\qed}
\]

\subsection{Proof of Proposition \ref{p:trade-b}}\label{s:p4}

Consider how a buyer who knows that $v$ is in $[y_0,y_1]$ reacts when the seller asks $p$. Suppose that $p<1/2$. Then the buyer speculates that $v$ in $\{y_0\}$. This is consistent with the strategy of the seller as $p<1/2$ is out of equilibrium. Given this speculation, accepting $p$ if and only if $p\le y_0$ is a best compromise.

Now suppose that $p\ge 1/2$. The largest interval $[x_0,x_1]\subset [0,1]$ that satisfies \eqref{e-p4-1} is $[2p-1,1]$. So the buyer concludes that
\[
v\in V_b(p,y_0,y_1)=[y_0,y_1]\cap [2p-1,1]=[\max\{y_0,2p-1\},y_1].
\]
Given this information about the set of possible values, the buyer now compares her maximum losses when accepting ($\alpha=1$) and rejecting ($\alpha=0$) the price $p$. The maximum loss from rejecting $p$ is
\[
l_b(0;p,y_0,y_1)=\sup_{v\in [\max\{y_0,2p-1\},y_1]}(v-p)=y_1-p.
\]
The maximum loss from accepting $p$ is
\[
l_b(1;p,y_0,y_1)=\sup_{v\in [\max\{y_0,2p-1\},y_1]}(p-v)=\min\{p-y_0,1-p\}.
\]
Because $y_1\le 1$, it is easy to verify that $l_b(0;p,y_0,y_1)\ge l_b(1;p,y_0,y_1)$ if and only if $p\le\frac 1 2(y_0+y_1)$. Thus, it is a best compromise to buy the good when $p\le\frac 1 2(y_0+y_1)$ and not to buy it otherwise.

Let us consider the first stage of the game. Anticipating the buyer's equilibrium behavior $\alpha^*$, the seller chooses a price that minimizes his maximal loss. Observe that choosing a price $p<1/2$ is dominated by $p=1/2$. This is because when $p<1/2$, the buyer accepts $p$ if and only if the value $v$ is guaranteed to be at least as high as the price $p$. In this case, the seller's payoff cannot be positive.

Let $p\ge 1/2$. Suppose first that $p>\frac 1 2(y_0+y_1)>v$. So $p$ is rejected, but it would be optimal to reduce the price so that the buyer accepts it, specifically, to ask $p'=(y_0+y_1)/2$, and thus gain $p'-v$. The supremum of this loss is given by
\[
\sup_{\substack{(v,y_0,y_1):\, p>\frac 1 2(y_0+y_1)>v,\\ v\in[x_0,x_1]\cap[y_0,y_1]}} \left(\frac{y_0+y_1} 2-v\right)=p-x_0.
\]
Second, suppose that $p\le \frac 1 2(y_0+y_1)<v$. So $p$ is accepted, but it would be optimal not to sell, and thus gain $v-p$. The supremum of this loss is given by
\[
\sup_{\substack{(v,y_0,y_1):\, p\le \frac 1 2(y_0+y_1)<v,\\ v\in[x_0,x_1]\cap[y_0,y_1]}} \left(v-p\right)=x_1-p.
\]
Third, suppose that $p\le \frac 1 2(y_0+y_1)$ and $v\le \frac 1 2(y_0+y_1)$. So $p$ is accepted, but it would be optimal to sell at a higher price, specifically, at $p'=\frac 1 2(y_0+y_1)$, and thus gain $p'-p$. The supremum of this loss is given by
\[
\sup_{\substack{(v,y_0,y_1):\, p,v\le \frac 1 2(y_0+y_1),\\ v\in[x_0,x_1]\cap[y_0,y_1]}} \left(\frac{y_0+y_1} 2-p\right)=\frac{x_1+1}{2}-p.
\]
Finally, suppose that $p>\frac 1 2(y_0+y_1)$ and $v\ge \frac 1 2(y_0+y_1)$. So, $p$ is rejected, but any price $p'>v$ would have been rejected too, so the loss is zero in this case.

The maximum loss associated with the price $p\ge 1/2$ is the largest of the four losses computed above, so
\[
l_s(p;x_0,x_1)=\max\left\{p-x_0,x_1-p,\frac{x_1+1}{2}-p,0\right\}=\max\left\{p-x_0,\frac{x_1+1}{2}-p\right\}.
\]
A best compromise price minimizes the maximum loss $l_s(p;x_0,x_1)$ among all prices $p\ge 1/2$, leading to the seller's equilibrium strategy \eqref{e-p4-1}.
\qed

\subsection{Proof of Proposition \ref{p:for1}}\label{s:P-F}
Before proving Proposition \ref{p:for1}, we present a simple lemma on how the loss of a forecast is computed.  
\begin{lemma}\label{l:quadratic}
$
l(a;z)=\sup_{F\in\mathcal  F_{\delta}}(a-\E_{F,G_{\eps}}[\theta|z])^2$.
\end{lemma}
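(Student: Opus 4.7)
The plan is to use the standard bias–variance decomposition of quadratic loss together with the fact that the inner supremum is attained at the conditional mean, which automatically lies in $[0,1]$.

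First I would fix $F \in \mathcal{F}_\delta$ and write $\mu_F(z) = \mathbb{E}_{F,G_\eps}[\theta|z]$. Since $\theta \in [0,1]$ under any $F \in \mathcal{F}_\delta$, the conditional expectation satisfies $\mu_F(z) \in [0,1]$. For any $a' \in [0,1]$, the bias–variance decomposition gives
\[
\mathbb{E}_{F,G_\eps}[-(a'-\theta)^2 \mid z] = -(a' - \mu_F(z))^2 - \mathrm{Var}_{F,G_\eps}(\theta \mid z).
\]
Since the variance term is independent of $a'$, the inner supremum $\sup_{a' \in [0,1]} \mathbb{E}_{F,G_\eps}[-(a'-\theta)^2\mid z]$ is achieved at $a' = \mu_F(z)$, and its value equals $-\mathrm{Var}_{F,G_\eps}(\theta \mid z)$.

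Next I would compute the inner difference in the definition of $l(a;z)$. Applying the same decomposition to the term involving $a$ and subtracting, the variance terms cancel, leaving
\[
\sup_{a' \in [0,1]} \mathbb{E}_{F,G_\eps}[-(a'-\theta)^2 \mid z] - \mathbb{E}_{F,G_\eps}[-(a-\theta)^2 \mid z] = (a - \mu_F(z))^2.
\]
Taking the supremum over $F \in \mathcal{F}_\delta$ on both sides then yields the claimed identity
\[
l(a;z) = \sup_{F \in \mathcal{F}_\delta} \big(a - \mathbb{E}_{F,G_\eps}[\theta \mid z]\big)^2.
\]

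There is essentially no obstacle here: the only point that needs a line of justification is that the unconstrained maximizer $\mu_F(z)$ of the quadratic objective lies inside $[0,1]$, which follows from $\theta \in [0,1]$ almost surely under every $F \in \mathcal{F}_\delta$, so the constraint $a' \in [0,1]$ is never binding. This makes the lemma a direct consequence of the decomposition of mean squared error, reducing the computation of the maximum loss in Proposition \ref{p:for1} to analyzing how $\mathbb{E}_{F,G_\eps}[\theta \mid z]$ varies over $F$.
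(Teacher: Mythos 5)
Your proof is correct and is essentially the same argument as the paper's: both rest on the fact that the conditional mean $\E_{F,G_\eps}[\theta|z]$ maximizes the expected negative quadratic loss and that the variance term cancels in the difference, so the loss reduces to $(a-\E_{F,G_\eps}[\theta|z])^2$ for each $F$. The paper reaches this by a direct difference-of-squares factorization rather than by writing out the bias--variance decomposition, but the computation is identical in substance.
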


The intuition is as follows. The variance of $\theta$ conditional on a signal $z$ enters the payoffs additively, and thus cancels out when computing the loss. As a result, the maximum loss $l(a;z)$ is simply the maximum quadratic distance between a forecast $a$ and the mean value of $\theta$ conditional on $z$.

\begin{proof}[Proof of Lemma \ref{l:quadratic}]
Fix $G_{\eps}$. Let $\ol a_F(z)=\E_{F,G_{\eps}}[\theta|z]$. Observe that 
\beq\label{e:l-f}
\ol a_F(z)\in\argmax\limits_{a'\in[0,1]} \E_{F,G_{\eps}}[-(a'-\theta)^2|z].
\eeq
So, we have
\begin{multline*}
\sup\limits_{a'\in[0,1]} \E_{F,G_{\eps}}[-(a'-\theta)^2|z]-\E_{F,G_{\eps}}[-(a-\theta)^2|z]= \E_{F,G_{\eps}}[-(\ol a_F(z)-\theta)^2+(a-\theta)^2|z]\\
=\E_{F,G_{\eps}}[(a-\ol a_F(z))(a+\ol a_F(z)-2\theta)|z]=(a-\ol a_F(z))^2,
\end{multline*}
where the first equality is by \eqref{e:l-f} and the last equality is by $\E_{F,G_{\eps}}[\theta|z]=\ol a_F(z)$. Thus,
\[
l(a;z)=\sup_{F\in\mathcal  F_{\delta}} (a-\ol a_F(z))^2=\sup_{F\in\mathcal  F_{\delta}} (a-\E_{F,G_{\eps}}[\theta|z])^2.\qedhere
\]
\end{proof}

We now prove Proposition \ref{p:for1}. Different distributions $F\in\mathcal  F_{\delta}$ induce different conditional means $\E_{F,G_{\eps}}[\theta|z]$. Let $H(z)$ and $L(z)$ be the highest and lowest conditional means, respectively, so 
\beq\label{e:ex5-0}
H(z)=\sup_{F\in\mathcal  F_{\delta}}  \E_{F,G_{\eps}}[\theta|z] \quad\text{and}\quad L(z)=\inf_{F\in\mathcal  F_{\delta}}  \E_{F,G_{\eps}}[\theta|z].
\eeq
The loss of a forecast $a$ given a signal $z$ is 
\[
l(a;z)=\sup_{F\in\mathcal  F_{\delta}} (a-\E_{F,G_{\eps}}[\theta|z])^2=\max\left\{(a-H(z))^2,(a-L(z))^2\right\}
\]
where the first equality is by Lemma \ref{l:quadratic}, and the last equality is by the convexity of the expression. Thus, the best compromise forecast is the midpoint between the highest and lowest conditional means, so
\[
a^*(z)=\inf_{a\in[0,1]}l(a;z)=\frac{1}{2}\left(H(z)+L(z)\right).
\]

It remains to find $H(z)$ and $L(z)$. Suppose that $z\ge\theta_0$.
Observe that
\[
\E_{F,G_{\eps}}[\theta|z]=\frac{(1-\eps)f(z)z+\eps\int_0^1\theta f(\theta)\df\theta}{(1-\eps)f(z)+\eps\int_0^1 f(\theta)\df\theta}=\frac{(1-\eps)f(z)z+\eps\theta_0}{(1-\eps)f(z)+\eps}
\]
is increasing in $f(z)$. Using the assumption that $f(z)\le 1/\delta$, we have
\[
H(z)=\sup_{F\in\mathcal  F_{\delta}}\frac{(1-\eps)f(z)z+\eps\theta_0}{(1-\eps)f(z)+\eps}=\left.\frac{(1-\eps)f(z)z+\eps\theta_0}{(1-\eps)f(z)+\eps}\right|_{f(z)=1/\delta}=\frac{(1-\eps)z+\eps\delta\theta_0}{1-\eps+\eps\delta}.
\]
Using the assumption that $f(z)\ge \delta$, we have
\[
L(z)=\inf_{F\in\mathcal  F_{\delta}}\frac{(1-\eps)f(z)z+\eps\theta_0}{(1-\eps)f(z)+\eps}=\left.\frac{(1-\eps)f(z)z+\eps\theta_0}{(1-\eps)f(z)+\eps}\right|_{f(z)=\delta}=\frac{(1-\eps)\delta z+\eps\theta_0}{(1-\eps)\delta+\eps}.
\]
Analogously,  for $z\le\theta_0$ we obtain $H(z)=\frac{(1-\eps)\delta z+\eps\theta_0}{(1-\eps)\delta+\eps}$ and $L(z)=\frac{(1-\eps)z+\eps\delta\theta_0}{1-\eps+\eps\delta}$. Thus we obtain
\[
a^*(z)=\frac{1}{2}\left(H(z)+L(z)\right)=\frac{1}{2}\left(\frac{(1-\eps)z+\eps\delta\theta_0}{1-\eps+\eps\delta}+\frac{(1-\eps)\delta z+\eps\theta_0}{(1-\eps)\delta+\eps}\right).\qedhere
\]

\section*{Appendix B. Alternative Model of Forecasting}\label{s:F-2}
\renewcommand{\thesection}{B}
%\setcounter{subsection}{0}
%\subsection{Forecasting: Unknown Distribution of Signal}
This section considers an alternative variation of the forecasting model presented in Section \ref{s:forec}. Here we are interested in how to forecast a random variable with a known distribution after receiving a noisy signal that has an unknown distribution.

Suppose that the agent knows the distribution $F$ of $\theta$, but is uncertain about how the noisy signal $z$ is generated. 
The following assumptions are made about this signal. The signal $z$ is known to be not too far from the true value of $\theta$, where a parameter $\delta>0$ describes the maximal distance. So $\delta$ can also be interpreted as the precision of the signal. Let $y=z-\theta$ be called the noise. So it is known that $|y|\le \delta$. The distribution of the noise $y$ has a certain and an uncertain component.  Let $\eps\in[0,1]$ be a known parameter. With probability $1-\eps$ the noise $y$ is drawn from a known distribution $G_0$ and with probability $\eps$ it is drawn from an unknown distribution $G_1$. So $\eps$ measures how uncertain the agent is about how the noise is generated. Given the support restrictions on $y$, it follows that $G_0$ and $G_1$ both have support contained in $[-\delta, \delta]$. Let $G_\delta$ be the set of all distributions of $y$ that satisfy the above description.

Let $\E_{F,G_\delta,\eps}[\cdot|z]$ denote the conditional mean of $\theta$ given $z$ for $G_\delta\in \mathcal G_\delta$. The maximum loss associated with a forecast $a\in[0,1]$ given a signal $z\in[0,1]$ is 
\[
l(a;z)=\sup_{G_\delta\in\mathcal  G_{\delta}} \left(\sup_{a'\in[0,1]} \E_{F,G_\delta,\eps}[-(a'-\theta)^2|z]-\E_{F,G_\delta,\eps}[-(a-\theta)^2|z]\right).
\]
Let $H(z)$ and $L(z)$ be the highest and lowest conditional means, so 
\[
H(z)=\sup_{G_\delta\in\mathcal  G_{\delta}} \E_{F,G_\delta,\eps}[\theta|z] \quad\text{and}\quad L(z)=\inf_{G_\delta\in\mathcal  G_{\delta}}\E_{F,G_\delta,\eps}[\theta|z].
\]
It is straightforward to verify that
\[
H(z)=\sup_{x\in[-\delta,\delta]} \frac{\eps f(z-x)(z-x)+(1-\eps)\int_{-\delta}^{\delta}(z-y) f(z-y)\df G_0(y)}{\eps f(z-x)+(1-\eps)\int_{-\delta}^{\delta}f(z-y)\df G_0(y)},
\]
with an analogous expression for $L(z)$. We obtain the following result.

\begin{proposition}
The agent's best compromise is
\[
a^*(z)=\frac 1 2\left(H(z)+L(z)\right).
\]
\end{proposition}

The proof is analogous to that of Proposition \ref{p:for1} and thus omitted.

The best compromise is the midpoint between the highest and lowest conditional means. The agent's best compromise forecast depends on the precision $\delta$ of her signal, as well as on the degree $\eps$ of her uncertainty. We show how each of these two parameters independently influences the best compromise forecast.

Fix the degree of uncertainty $\eps$. If the signal is very precise in the sense that $\delta$ is very small, then each of the two extreme conditional means are close to $z$. Hence, the best compromise forecast will also be close to $z$. Formally, $\lim_{\delta\to 0} a^*(z)=z$. 

Fix the precision $\delta$ of the signal. As the degree of uncertainty $\eps$ vanishes, both extreme conditional means converge to the conditional mean under the benchmark distribution $G_0$. Formally, $\lim_{\eps\to 0} a^*(z)= \E_{F,G_0,0}[\theta|z]$. For instance, if $G_0$ is the uniform distribution, then the best compromise forecast converges to the expected value of $\theta$ conditional on $\theta$ being within $\delta$ of the signal.

As the degree of uncertainty $\eps$ becomes large, the role of the benchmark $G_0$ diminishes and almost any noise within $[-\delta,\delta]$ becomes possible. When $\eps=1$, it could be that $G_1$ puts all mass on $-\delta$, in which case $\E_{F,G_\delta,\eps}[\theta|z]=z+\delta$. This is the highest conditional mean given $z$, so $H(z)=z+\delta$. It could also be that  $G_1$ puts all mass on $\delta$, in which case $\E_{F,G_\delta,\eps}[\theta|z]=z-\delta$. This is the lowest conditional mean given $z$, so $L(z)=z-\delta$. Consequently, the best compromise forecast is close to the signal $z$ when the agent is very uncertain about how $z$ is generated. Formally, $a^*(z)\to z$ as $\eps\to 1$. 

\begin{remark}\label{R:F}
Note that the distribution $F$ of the underlying variable of interest plays no role when the degree of uncertainty is extreme, so $\eps=1$. Consequently, we obtain that if the agent knows neither $F$ nor the distribution of the noise, then the best compromise forecast is to choose the signal.
\end{remark}

{\setlength{\baselineskip}{0.2in} 
\setlength\bibsep{0.2\baselineskip}
\bibliographystyle{aea}
\bibliography{bc}}

\end{document}